\newcommand{\E}[1]{\mathbb{E}\left[{#1}\right]}
\newcommand{\Ex}[2]{\mathbb{E}_{#1}\left[{#2}\right]}
\newcommand{\pde}[2]{\frac{\partial{#1}}{\partial{#2}}}
\newcommand{\norm}[1]{\left\| #1 \right\|}
\newcommand{\inner}[2]{\left\langle #1, #2 \right\rangle}
\newcommand{\K}{\mathbf{K}}
\newcommand{\bmat}[1] {\begin{bmatrix} #1 \end{bmatrix}}
\newcommand{\Op}[1]{O_p\left(#1\right)}
\newcommand{\op}[1]{o_p\left(#1\right)}
\newcommand{\I}[1]{I\left(#1\right)}
\newtheorem{theorem}{Theorem}[section]
\newtheorem{lemma}{Lemma}[section]
\newtheorem{corollary}{Corollary}[theorem]
\newcommand{\nblue}[1]{{#1}}
\title{Tree Boosted Varying Coefficient Models}
\author{Yichen Zhou, Giles Hooker\\ \\ Department of Statistics and Data Science, Cornell University}
\begin{document}
\maketitle
\begin{abstract}
This paper investigates the integration of gradient boosted decision trees and varying coefficient models. We introduce the tree boosted varying coefficient framework which  justifies the implementation of decision tree boosting as the nonparametric effect modifiers in varying coefficient models. This framework requires no structural assumptions in the space containing the varying coefficient covariates, is easy to implement, and keeps a balance between model complexity and interpretability. To provide statistical guarantees, we prove the asymptotic consistency of the proposed method under the regression settings with $L^2$ loss. We further conduct a thorough empirical study to show that the proposed method is capable of providing accurate predictions as well as intelligible visual explanations.
\end{abstract}

\section{Introduction}
In this paper we study the amalgamation of gradient boosting, especially gradient boosted decision trees \citep[GBDT or GBM:][]{friedman2001greedy}, and varying coefficient models \citep[VCM:][]{hastie1993varying}. A varying coefficient model is a semi-parametric model with coefficients that change along with each input. Under a general statistical learning setting with a set of covariates and some response of interest, a VCM isolates part of those covariates as \textit{effect modifiers} based on which model coefficients are determined through a few varying coefficient mappings. These coefficients then get joined with the remaining covariates to generate a parametric prediction. To elaborate, consider performing least square regression on $(X, Z, Y) \in \mathbb{R}^p \times \mathbb{A} \times \mathbb{R}, i=1,\dots, n$ where $X = (X^1, \dots, X^p)$, $X$ and $Z$ are the covariates and $Y$ the response. One VCM regression can take the form of 
\begin{align}
\label{eqn:vcm1}
g(\E{Y|X, Z}) = \beta^{0}(Z) + \sum_{i=1}^p \beta^i(Z) X^i,
\end{align}
with the parametric part being a generalized linear model with the link function $g$. In this context we would like to refer to $X$ as the \textit{predictive covariates} and $Z$ the \textit{action  covariates (effect modifiers)} which are drawn from $\mathbb{A}$ the \textit{action space}. $\beta^i(\cdot): \mathbb{A} \to \mathbb{R}, i=0,1,\dots,p$ are, conventionally nonparametric, \textit{varying coefficient mappings}. While (\ref{eqn:vcm1}) maintains the linear structure, due to the dependence of $\beta$ on any given $Z$, the model belongs to a more complicated and flexible model space rather than the corresponding generalized linear model. 

\nblue{
Our proposed model, tree boosted VCM, utilizes ensembles of gradient boosted decision trees as the varying coefficient mappings $\beta$. To demonstrate, for each $\beta^i, i=0,\dots, p$, let
$$
\beta^i(z) = \sum_{j=1}^b t^i_j(z),
$$ 
an additive boosted tree ensemble of size $b$ with each $t_j^i$ a decision tree constructed sequentially through gradient boosting. We will postpone the details of model construction to Section 2. This strategy yields a model of
\begin{align}
\label{eqn:treeboostedvmc}
g(\E{Y|X, Z}) = \sum_{j=1}^b t^0_j(Z) + \sum_{i=1}^p \left(\sum_{j=1}^b t^i_j (Z)\right) X^i.
\end{align}
}

Introducing VCM aligns with our attempt to answer the rising concern about model intelligibility and transparency, around which there are two branches of methods. We can either apply \textit{post hoc} methods such that state-of-the-art ``black box" models are constructed before we grant them meanings through analyzing their results. There is a sizable literature on this topic, from the appearance of local methods \citep{ribeiro2016should} to recent applications on neural nets \citep{zhang2018visual}, random forests \citep{mentch2016quantifying, basu2018iterative} and complex model distillation \citep{lou2012intelligible, lou2013accurate, tan2017distill}. However, objectivity is one inevitable challenge of tying explanations to models, especially in the presence of plentiful universal local methods capable of dealing with most models. Any use of \textit{post hoc} analysis may be subject to justify the chosen explanatory method over the others, which is likely to add an additional explanation selection phase on top of the existing model selection. 

On the other hand, another branch of methods attempts to build interpretability into model structures, meaning that models should be the integration of simple and intelligible building blocks that they become accountable by human inspection once trained. Examples of this range from simple models as generalized linear models and decision trees, to models that guarantee monotonicity \citep{you2017deep, chipman2016high} or have identifiable components \citep{melis2018towards}. Although having the advantage of not requiring \textit{post hoc} examination, in contrast to the aforementioned methods, self-explanatory models are restricted by their possible model complexity and flexibility, potentially limiting their accuracy. This lack of flexibility also implies that such a model, unless possessing a granular structure, may only provide global interpretation because all observations are reasoned via an identical procedure. Such behavior prevents us from zooming into a small region in the sample space.

Following this discussion, VCM belongs to the second category as long as the involved parametric models are intelligible. It is an instant generalization of parametric methods to allow the use of local coefficients, which leads to improvements in model complexity and accuracy, whereas the predictions are still produced through parametric relations between predictive covariates and coefficients. This combination demonstrates a feasible means to balance the trade-off between flexibility and intelligibility. 

A great amount of research has been conducted to study the asymptotic properties of different VCMs when splines or kernel smoothers are implemented as the nonparametric varying coefficient mappings. We refer the readers to \citet{park2015varying} for a comprehensive review. In this paper we intend to conclude similar results regarding the asymptotics of tree boosted VCM.

\subsection{Models under VCM} Under the settings of (\ref{eqn:vcm1}), \citet{hastie1993varying} pointed out that VCM is the generalization of generalized linear models, generalized additive models, and various other semi-parametric models with careful choices of the varying coefficient mapping $\beta$. 

We would like to mention two special cases that have drawn our attention. One is the functional trees introduced in \citet{gama2004functional}. A functional tree segments the action space into disjoint regions, after which a parametric model gets fitted within each region using sample points inside. Logistic regression trees, for which there is a sophisticated building algorithm \citep[LOTUS:][]{chan2004lotus}, belong to such model family. Their prediction on $(x_0, z_0)$ is
$$
P(\hat{y}_0=1) =  \sum_{i=1}^K \frac{1}{1+e^{-x_0^T\beta_i}}\cdot \I{z_0 \in A_i} = \frac{1}{1+e^{-x_0^T\beta_k}},
$$
provided $\mathbb{A} = \coprod_{i=1}^K A_i$ the tree segmentation, $z_0 \in A_k$ and $\beta_k = \beta(z) ,\forall z \in A_k$. The conventional approach to determine functional tree structure is to recursively enumerate through candidate splits and choose the one that reduces the training loss the most between before and after splitting. Despite of the guaranteed stepwise improvement, such greedy strategy has the side effect of being both time consuming and mathematically intractable. 

Another case is the partially linear regression that assumes
$$
Y = X^T\beta + f(Z) + \epsilon_Z, \quad \epsilon_Z \sim N(0, \sigma^2(Z)), 
$$
where $\beta$ is a global linear coefficient \citep[see][]{hardle2012partially}. It is equivalent to a least square VCM with all varying coefficient mappings except the intercept being constant. 

\subsection{Trees and VCM}
While popular choices of varying coefficient mappings are either splines or kernel smoothers, it is a natural transition to consider exercising decision trees \citep[CART:][]{breiman1984classification} and decision tree ensembles to serve as these nonparametric mappings. Using trees enables us to work adaptively with any action space $\mathbb{A}$ compatible with decision tree splitting logic, for example an arbitrary high dimensional mixture of continuous and discrete quantities, whereas traditional methods require to craft model structures case by case depending on the given $\mathbb{A}$. 

We start with the straightforward attempts to utilize a single decision tree as varying coefficient mappings \citep{buergin2017coefficient, berger2017tree}. Although having a simple form, these implementations are also subject to the instability caused by the greedy tree building algorithm. Moreover, the mathematical intractability of decision trees prevents these single-tree based varying coefficient mappings from provable optimality. This instead suggests implementations through tree ensembles of either random forests or gradient boosting. \nblue{One example is to use the linear local forests introduced in \citet{friedberg2018local} that perform local linear regression with an honest random forest kernel, while the predictive covariates $X$ are reused as the action covariates $Z$. In terms of boosting methods, \citet{wang2014boosted} proposed the first tree boosted VCM algorithm. They reduced the empirical risk by boosting using functional trees to fit the residuals to improve model coefficients, resulting in models of
$$
g(\E{Y|X, Z}) = \left(\sum_{j=1}^b t_j(Z)\right)^T (1, X),
$$ where each $t_j$ returns a $(p+1)$ dimensional response.} However, building a functional tree ensemble requires the construction and comparison of massive amounts of submodels and the joint optimization of all coefficients. In contrast, we aim to perform gradient boosting down on the coefficient level to comply with the standard boosting framework in order to separate the coefficients and to make tree boosted VCM coherent with existing boosting theories. 

In the following sections, we explore the feasibility and statistical properties of adopting generic gradient boosted decision trees to serve as the nonparametric varying coefficient mappings for VCM. In Section 2, we share the perspective of analyzing such models as \textit{local gradient descent} which creates functional coefficients and optimizes using local information. We will prove the consistency of this method in Section 3 and present a few empirical study results in Section 4. Further discussions on potential variations of this method follow in Section 5. 

\section{Tree Boosted Varying Coefficient Models}
\subsection{Notations}
We will use the following notations in our discussion. We use superscripts $0, \dots, p$ to indicate individual components, i.e. $\beta = (\beta^0, \dots, \beta^p)^T$, and subscripts $1, \dots, n$ to indicate sample points or boosting iterations. For any $X$ when there is no ambiguity we assume $X$ contains the intercept column, i.e. $X = (1, X^1,\dots, X^p)$, so that $X^T\beta = \beta^0 + \sum_{i=1}^p X^i \beta^i$ can be used to specify a linear regression. 
\subsection{Boosting Framework}
We start by looking at a parametric generalized linear model with coefficients $\beta \in \mathbb{R}^{p+1}$ using gradient descent. Given sample $(x_1, z_1, y_1),\dots, (x_n, z_n, y_n)$ and a loss function $l$, gradient descent minimizes the empirical risk to search for the optimal $\hat{\beta}^*$ as
$$
L(\hat{\beta}) =  \frac{1}{n}\sum_{i=1}^n l(y_i, x_i^T\hat{\beta}), \quad \hat{\beta}^* = \arg \min_{\hat{\beta}} L(\hat{\beta}).
$$
To improve an interim $\hat{\beta}$, we move it in the negative gradient direction 
$$
\Delta_{\hat{\beta}} = \nabla_{\beta} L = -\nabla_{\beta} \left(\frac{1}{n}\sum_{i=1}^n l(y_i, x_i^T\beta) \Big|_{\beta = \hat{\beta}}\right),
$$
to obtain a new iteration $\hat{\beta}' = \hat{\beta} + \lambda \Delta_{\hat{\beta}}$ for a positive and small learning rate $\lambda \ll 1$. 

In order to extend this setting to varying coefficient models, we instead consider $\beta$ to be a mapping $\beta = \beta(z): \mathbb{A} \to \mathbb{R}^{p+1}$ so that it will apply to the covariates based on their values in the action space. Writing estimate of $\beta$ by $\hat{\beta}: \mathbb{A} \to \mathbb{R}^{p+1}$, the empirical risk remains a similar form
$$
L(\hat{\beta}) = \frac{1}{n}\sum_{i=1}^n l(y_i, x_i^T\hat{\beta}(z)).
$$
We perform the same gradient calculation as above, but only pointwisely for now. It produces the negative gradient direction at the point $z_i$
\begin{align}
\label{fml:gradient}
\Delta_{\beta}(z_i) = - \nabla_{\beta}l(y_i, x_i^T\beta) \Big|_{\beta = \hat{\beta} (z_i)}.
\end{align}
As a result, we get the functional improvement of $\hat{\beta}$ captured at each of the sample points, i.e. $(z_1, \Delta_{\beta}(z_1)), \dots, (z_n, \Delta_{\beta}(z_n))$. This observation leads us to employ gradient descent in functional space, also known as boosting \citep{friedman2001greedy}. For any function family $\mathcal{T}$ capable of regressing $\Delta_{\beta}(z_1), \dots, \Delta_{\beta}(z_n)$ on $z_1,\dots, z_n$, the corresponding ordinary boosting framework works as follows.
\begin{enumerate}[(B1)]
\item Start with an initial guess of $\hat{\beta}_0(\cdot)$. 
\item For each component $j = 0,\dots, p$ of $\hat{\beta}_b, b\geq 0$, we calculate the pseudo gradient at each point as
\begin{gather*}
\Delta^j_{\beta_i} = -  \pde{l(y_i, x_i^T\beta)}{\beta^j}\Big|_{\beta = \hat{\beta}_b(z_i)},
\end{gather*}
for $i=1,\dots, n$.
\item For each $j$, find a good fit $t_{b+1}^j \in \mathcal{T}: \mathbb{A} \to \mathbb{R}$ on $(z_i, \Delta^j_{\beta_i}), i=1,\dots, n$.
\item Update $\hat{\beta}_b$ with learning rate $\lambda \ll 1$.
\begin{gather*}
\hat{\beta}_{b+1}(\cdot) = \hat{\beta}_{b}(\cdot) + \lambda \begin{bmatrix}t_{b+1}^0(\cdot) \\ \vdots \\ t_{b+1}^p(\cdot)\end{bmatrix}. 
\end{gather*}
\end{enumerate}

When the spline method is implemented, $\mathcal{T}$ is closed under addition so that we will expect the result of (B4) to be expressed as a set of coefficients of basis functions for $\mathcal{T}$. On the other hand, when we apply decision trees in place of (B3):
\begin{enumerate}[(B3')]
\item For each $j$, build a decision tree $t_{b+1}^j: \mathbb{A} \to \mathbb{R}$ on $(z_i, \Delta^j_{\beta_i}), i=1,\dots, n$,
\end{enumerate}
the resulting varying coefficient mapping will be an additive tree ensemble, whose model space varies based on the ensemble size. We will refer to this method as \textit{tree boosted VCM}. 

Notice that the strategy of building a decision tree in (B3') influences the properties of the obtained tree boosted VCM. The standard CART strategy executes as follows.
\begin{enumerate}[(D1)]
\item Start at the root node. 
\item Given a node, numerate candidate splits and evaluate them using all $(z_i, \Delta^j_{\beta_i})$ such that $z_i$ is contained in the node.
\item Split on the best candidate split.
\item Keep splitting until stopping rules are met to form terminal nodes.
\item Calculate fitted terminal values in each terminal node using all $(z_i, \Delta^j_{\beta_i})$ such that $z_i$ is contained in the terminal node. 
\end{enumerate}
 
Recent developments on decision trees also suggest alternative strategies that produce better theoretical guarantees. We may consider \textit{subsampling} that generates a subset $w \subset \{1,\dots,n\}$ and only uses sample points indexed by $w$ in (D2). 
\begin{enumerate}[(D2')]
\item Given a node, numerate candidate splits and evaluate them using all $(z_i, \Delta^j_{\beta_i})$ such that $i \in w$ and $z_i$ is contained in the node.
\end{enumerate}
We may also consider \textit{honesty} \citep{wager2018estimation} which avoids using the responses, in our case $\Delta^j_{\beta_i}$, twice during both deciding the tree structure and deciding terminal values. For instance, a version of completely random trees discussed in \citet{zhou2018boulevard} chooses the splits using solely $z_i$ without evaluating the splits by the responses $\Delta^j_{\beta_i}$ in place of steps (D2) and (D3). 
\begin{enumerate}[(D2*)]
\item Given a node, choose a random split based on $z_i$'s contained in the node.
\end{enumerate}

\subsection{Local Gradient Descent with Tree Kernels} 
Decision tree fits in (B3') generate local linear combinations of pseudo-gradients thanks to the grouping effect carried by decision tree terminal nodes. To elaborate from a generic viewpoint, \nblue{for all tree building strategy we discussed above} we can introduce a kernel smoother $K: \mathbb{A}\times\mathbb{A} \to \mathbb{R}$ such that the estimated gradient at any new $z$ is given by 
\begin{align}
\label{fml:lgd}
\Delta_{\beta}(z) = - \sum_{i=1}^n \left(\nabla_{\beta}l(y_i, x_i^T\beta) \Big|_{\beta = \hat{\beta}(z_i)}\right) \cdot \frac{K(z, z_i)}{\sum_{j=1}^n K(z, z_j)}.
\end{align}
\nblue{In other words, with a fast decaying $K$, (\ref{fml:lgd}) can estimate the gradient at $z$ locally using weights given by
$$
S(z, z_i) = \frac{K(z, z_i)}{\sum_{j=1}^n K(z, z_j)}.
$$
We would like to define such method as \textit{local gradient descent}. 

During standard tree boosting employing CART strategy, after a decision tree is constructed each iteration, its induced smoother $K$ assigns equal weights to all sample points in the same terminal node. If we write $A(z_i) \subset \mathbb{A}$ the region in the action space corresponding to the terminal node containing $z_i$, we have $K(z, z_i) = I(z \in A(z_i))$ and we define the following 
$$  \K(z, z_i) \triangleq S(z, z_i)=  \frac{\I{z \in A(z_i)}}{\sum_{j=1}^n I(z_j \in A(z_i))}$$
to be the \textit{tree structure function} where we also use the convention that $0/0=0$. The denominator is the size of $z_i$'s terminal node and is equal to $
\sum_{j=1}^n I(z_j \in A(z))$ when $z$ and $z_i$ fall in the same terminal node. In the cases where subsampling or completely random trees are employed for the purpose of variance reduction, $\K$ will be taken to be the expectation such that
$$
\K(z, z_i) \triangleq \E{S(z, z_i)} = \E{\frac{I(z \in A(z_i))I(i \in w)}{\sum_{j=1}^n I(z_j \in A(z_i))I(i \in w)I(j \in w)}}.
$$
whose properties have been studied by \citet{zhou2018boulevard}. This expectation is taken over all possible tree structures and, if subsampling is applied, all possible subsamples $w$ of a fixed size, and the denominator in the expectation is again the size of $z_i$'s terminal node. 

In particular, by carefully choosing the rates for tree construction, this tree structure function is related to the random forest kernel introduced in \citet{scornet2016random} that takes the expectation of the numerator and the denominator separately as
$$
\K_{RF}(z, z_i) = \frac{\E{\I{z \in A(z_i)}}}{\E{\sum_{j=1}^n \I{z_j \in A(z_i)}}},
$$
in the sense that the deviations from these expectations are mutually bounded by constants.}

Gradient boosting applied under nonparametric regression setting has to be accompanied by regularization such as using a complexity penalty or early stopping to prevent overfitting. When decision trees are implemented as the base learners, the complexity penalty is implicitly embedded in the tree parameters such as tree depth and terminal node size, while early stopping can be enforced during training. In fact, while we keep the parametric linear structure in VCM, local neighborhood weighting used for fitting the nonparametric coefficient mappings still adds to the model complexity. Therefore moderate restrictions, especially growth rates, have to be applied to avoid building saturated models with respect to the action space.

\subsection{Examples}
\nblue{Tree boosted VCM generates a two-phase model such that the varying coefficient mappings generate effect modifiers and these effect modifiers join with predictive covariates linearly. In order to understand the varying coefficient mappings on the actions space, we provide a visualized example here by considering} the following data generating process:
\begin{gather*}
X \sim \mbox{Unif}[0,1]^3, Z = (Z^1, Z^2) \sim \mbox{Unif}[0,1]^2, \epsilon \sim N(0, 0.25),\\
Y = X^T \bmat{1\\3\\-5}\cdot I(Z^1+Z^2 < 1) + X^T \bmat{0\\10\\0} \cdot I(Z^1+Z^2 \geq 1) + \epsilon.
\end{gather*}
\nblue{We generate a sample of size 1,000 from the above distribution, apply the tree boosted VCM with 400 trees, and obtain the following estimation of the varying coefficient mappings $\beta$ on $Z$ in Figure \ref{fig:vcm_1}. Our fitted values accurately capture the true coefficients. }
\begin{figure}[h] 
   \includegraphics[width=0.325\linewidth]{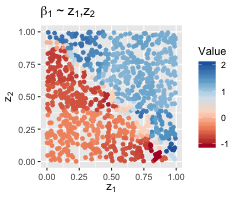} 
   \includegraphics[width=0.325\linewidth]{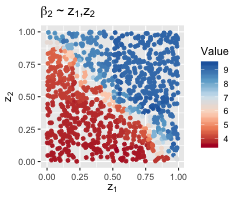} 
   \includegraphics[width=0.325\linewidth]{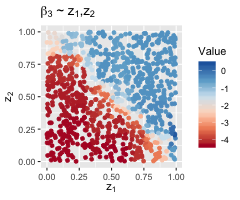} 
   \caption{Example of varying coefficient mappings on the action space under the OLS settings.}
   \label{fig:vcm_1}
\end{figure}

Switching to logistic regression setting and assuming similarly that
$$
\mbox{logit} P(Y=1) = \exp \left(X^T \bmat{1\\3\\-5}\cdot I(Z^1+Z^2 < 1) + X^T \bmat{0\\10\\0} \cdot I(Z^1+Z^2 \geq 1)\right),
$$
with a sample of size 1,000, Figure \ref{fig:vcm_2} presents equivalent plots for our tree boosted VCM. These results are less clear since logistic regression produces more volatile gradients. 
\begin{figure}[h] 
   \includegraphics[width=0.32\linewidth]{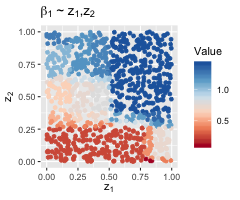} 
   \includegraphics[width=0.32\linewidth]{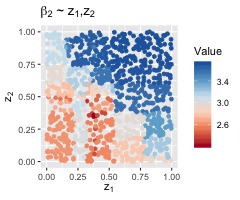}    
   \includegraphics[width=0.32\linewidth]{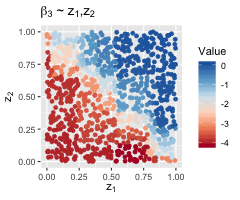} 
   \caption{Example of varying coefficient mappings on the action space under the logistic regression settings.}
   \label{fig:vcm_2}
\end{figure}

In both cases our methods correctly identify $\beta(z)$ as segmenting along the diagonal in $z$, providing clear visual identification of the behavior of $\beta(z)$. These figures are evidence of the capability of tree boosted VCM to find the varying coefficients without posting structural assumptions on the action space. Further empirical studies are presented in Section 4.

\section{Tree Boosted VCM Asymptotics}
There is a large literature providing statistical guarantees and asymptotic analyses of different versions of VCM with varying coefficient mappings obtained via splines or local smoothers \citep{park2015varying, fan1999statistical, fan2005profile}. In this section we will demonstrate the asymptotic analyses of tree boosted VCM under mild conditions.

\subsection{Tree Boosted VCM with $L^2$ Loss} Consider $L^2$ boosting setting for regression. Given the relationship  
$$
Y = f(X,Z) + \epsilon_Z, \quad \epsilon_Z \sim N(0, \sigma^2_Z),
$$
we work with the following assumptions.
\begin{enumerate}[(E1)]
\item Unit support of $X$ that $\mbox{supp } X = \{1\}\times[-1,1]^{p}$, which is achievable by standardizing without loss of generality for any finitely supported $X$.
\item Uniform bounded noise variance that $\sigma_Z \leq \sigma^*$.
\item $L^2$ loss that $L(u,y) = \frac{1}{2} (u-y)^2.$
\end{enumerate}
Under these conditions, evaluating the pseudo-gradient given in (\ref{fml:gradient}) yields
$$
\Delta_{\beta}(z_i) = - \nabla_{\beta}l(y_i, x_i^T\beta) \Big|_{\beta = \hat{\beta} (z_i)} = (y_i -x_i^T\hat{\beta}(z_i))\cdot x_i.
$$
For an existing terminal node $R \subseteq \mathbb{A}$, as per (\ref{fml:lgd}), the decision tree update in $R$ is
\begin{align}
\label{fml:treeupdate}
\Delta_{\beta}(z \in R) = \frac{\sum_{i=1}^n(y_i -x_i^T\hat{\beta}(z_i))\cdot x_i\cdot \I{z_i \in R}}{\sum_{i=1}^n \I{z_i \in R}},
\end{align}
\nblue{
and should subsample $w$ be present
\begin{align*}
\Delta_{\beta}(z \in R; w) = \frac{\sum_{i=1}^n(y_i -x_i^T\hat{\beta}(z_i))\cdot x_i\cdot \I{z_i \in R}\I{i \in w}}{\sum_{i=1}^n \I{z_i \in R}\I{i \in w}}.
\end{align*}}

\subsection{Decomposing Decision Trees}
We assume the action space $\mathbb{A}$ involves only continuous and categorical covariates, therefore we will consider its embedding into a Euclidean space $\mathbb{R}^d$ where $d = \mbox{dim}(\mathbb{A})$ is the  dimension of the embedding. Denote 
$
\mathcal{R} = \{(a_1, b_1] \times\cdots\times (a_d, b_d] | -\infty \leq a_i \leq b_i \leq \infty\}
$
 the collection of all hyper rectangles in $\mathbb{A}$. This set includes all possible terminal nodes of any decision tree built on $\mathbb{A}$. Given the distribution $(Z, 1, X) \sim \mathbb{P}$, we define the inner product 
$\inner{f_1}{f_2} = \Ex{\mathbb{P}}{f_1f_2},$
and the norm $\norm{\cdot} = \norm{\cdot}_{\mathbb{P},2}$ on the sample space $\mathbb{A} \times\{1\}\times [-1,1]^p$. For a sample of size $n$, we write the (unscaled) empirical counterpart by $\inner{f_1}{f_2}_n = \sum_{i=1}^n f_1(x_i, z_i) f_2(x_i, z_i),$ 
such that $n^{-1}\inner{f_1}{f_2}_n \to \inner{f_1}{f_2}$ by the law of large numbers, with a corresponding norm $\norm{\cdot}_n$.

Consider the following classes of functions on $\mathbb{A} \times \{1\}\times[-1, 1]^{p}$.
\begin{itemize}
\item $\mathcal{H} = \left\{h_R(x,z) = \I{z \in R} | R \in \mathcal{R} \right\}$, indicators of hyper rectangles.
\item $\mathcal{G} = \left\{g_{R, j}(x,z) =  \I{z \in R} \cdot x^j | R \in \mathcal{R}, j=0,\dots, p \right\}$ constants and coordinate mappings in hyper rectangles in $\mathcal{R}$. In particular we write  $1 = x^0$ so that $g_{R, 0} = h_R$.
\end{itemize}

\citet{buhlmann2002consistency} established a consistency guarantee for tree-type basis functions for $L^2$ boosting, in which the key point is to bound the gap between the boosting procedure and its population version by the uniform convergence in distribution of the family of indicators for hyper rectangles. We take a similar approach, for which we have to extend the uniform convergence to a broader function class defined as $\mathcal{G}$ as defined above. The following lemma provides uniform bounds on the asymptotic variability pertaining to $\mathcal{G}$ using Donsker's theorem \citep[see][]{van1997weak}.

\begin{lemma}
\label{lemma:donsker}
For given $L^2$ function $f$ and random sub-Gaussian noise $\epsilon$, the following empirical gaps  
\begin{enumerate}
\item $\xi_{n,1}  = \sup_{R \in \mathcal{R}}  \left| \norm{h_R}^2 - \frac{1}{n}\norm{h_R}_n^2  \right|, $
\item $\xi_{n,2}  = \sup_{R \in \mathcal{R}, j=0, \dots, p}  \left| \norm{g_{R, j}}^2 - \frac{1}{n}\norm{g_{R, j}}_n^2 \right|,$  
\item $\xi_{n,3}  = \sup_{R \in \mathcal{R}, j=0, \dots, p}  \left| \inner{f}{g_{R, j}} - \frac{1}{n}\inner{f}{g_{R,j}}_n  \right|, $
\item $\xi_{n,4}  = \sup_{R \in \mathcal{R}, j=0, \dots, p}  \left|\frac{1}{n}\inner{\epsilon}{g_{R, j}}_n  \right|,  $
\item $\xi_{n,5}  = \sup_{R_1, R_2 \in \mathcal{R}, j, k=0, \dots, p}  \left| \inner{g_{R_1, j}}{g_{R_2, k}} - \frac{1}{n}\inner{g_{R_1, j}}{g_{R_2, k}}_n  \right|, $
\item $\xi_{n,6}  = \left| \frac{1}{n}\norm{f+\epsilon}_n^2 - \norm{f+\epsilon}^2\right|,$
\end{enumerate}
satisfy that $\xi_n = \max_{i=1}^6 \xi_{n,i} = \Op{n^{-\frac{1}{2}}}.$
\end{lemma}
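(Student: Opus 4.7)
The plan is to recognize each of the six quantities as the supremum of an empirical process indexed by a suitable function class, and then appeal to Donsker's theorem (or a direct VC-subgraph tail bound) to obtain the $\Op{n^{-1/2}}$ rate. The core observation is that the collection $\mathcal{R}$ of hyper rectangles in $\mathbb{R}^d$ has VC dimension at most $2d$, so $\mathcal{H}$ is a uniformly bounded VC-subgraph class. From this building block every class I need arises by multiplying by bounded coordinates of $x \in [-1,1]^p$ or by intersecting rectangles, operations that preserve the VC-subgraph (and hence Donsker) property while keeping uniform envelopes bounded.

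First I would handle the five ``clean'' pieces $\xi_{n,1}, \xi_{n,2}, \xi_{n,3}, \xi_{n,5}, \xi_{n,6}$. For $\xi_{n,1}$, since $h_R^2 = h_R$, the claim reduces to uniform convergence of $\mathbb{P}_n(z \in R)$ to $\mathbb{P}(z \in R)$ over $R \in \mathcal{R}$, the textbook Donsker example. For $\xi_{n,2}$, note $g_{R,j}^2(x,z) = \I{z \in R}\cdot (x^j)^2$, a class obtained by multiplying $\mathcal{H}$ by the finite family $\{(x^j)^2\}_{j=0}^p$; still a bounded VC-subgraph class, so Donsker applies. $\xi_{n,3}$ is the supremum of $g \mapsto \mathbb{P}_n(fg) - \mathbb{P}(fg)$ over the bounded Donsker class $\mathcal{G}$, where the envelope $|f|$ has finite second moment, yielding the CLT-type rate by the functional CLT. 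For $\xi_{n,5}$, the product class $\{g_{R_1,j}\cdot g_{R_2,k}\}$ consists of functions $\I{z \in R_1 \cap R_2}\, x^j x^k$; intersections of rectangles are rectangles and the coordinate product is bounded, so the same VC-subgraph bound applies. Finally $\xi_{n,6}$ is a scalar law-of-large-numbers statement for the single square-integrable random variable $(f+\epsilon)^2$, for which the central limit theorem gives the $n^{-1/2}$ rate.

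The more delicate term is $\xi_{n,4}$, because here the randomness driving the supremum comes from the noise $\epsilon_i$ rather than from the sampling of $(X,Z)$. Conditionally on $(x_i,z_i)_{i=1}^n$, the process $g \mapsto \tfrac{1}{n}\sum_i \epsilon_i g(x_i,z_i)$ is sub-Gaussian on $\mathcal{G}$ with respect to the empirical $L^2$ pseudometric. Because $\mathcal{G}$ is a uniformly bounded VC-subgraph class, its uniform covering numbers satisfy $N(\varepsilon, \mathcal{G}, L^2(Q)) \lesssim \varepsilon^{-V}$ for every probability measure $Q$ and a constant $V = V(d,p)$. Dudley's entropy integral is therefore finite, and the standard maximal inequality (e.g.\ Theorem~2.14.1 of van der Vaart and Wellner) yields $\mathbb{E}\bigl[\sup_{g \in \mathcal{G}} |\tfrac{1}{n}\sum_i \epsilon_i g(x_i, z_i)|\bigr] = O(n^{-1/2})$; Markov's inequality then delivers $\xi_{n,4} = \Op{n^{-1/2}}$, and combining via a union bound over the six terms completes the proof.

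The main obstacle will be verifying the uniform entropy bounds cleanly for the enlarged product class appearing in $\xi_{n,5}$ and for the sub-Gaussian chaining in $\xi_{n,4}$; in both cases the argument hinges on showing that intersecting rectangles and multiplying by coordinates in $[-1,1]$ only inflate the VC dimension by a constant depending on $d$ and $p$. Once this uniform covering bound is in place, all six suprema fall within the same Donsker or maximal-inequality framework, and $\xi_n = \max_{i} \xi_{n,i} = \Op{n^{-1/2}}$ follows.
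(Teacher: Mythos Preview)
Your proposal is correct and follows essentially the same route as the paper: both arguments reduce each $\xi_{n,i}$ to a uniform empirical-process bound by showing the relevant function classes ($\mathcal{H}$, $\mathcal{G}$, $f\mathcal{G}$, $\mathcal{G}\cdot\mathcal{G}$) are Donsker via VC/closure properties, with $\xi_{n,6}$ handled by the scalar CLT. The only notable difference is that you treat $\xi_{n,4}$ more carefully through a conditional sub-Gaussian chaining argument, whereas the paper simply asserts that the Donsker property of $\mathcal{G}$ covers it; your version is more explicit but not a different idea.
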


Introduce the empirical remainder function $\hat{r}_b$ such that 
$$
\hat{r}_0(x,z) = f(x,z) + \epsilon,\quad \hat{r}_b(x,z) = f(x,z) + \epsilon - \hat{\beta}_b(z)^Tx, b>0,
$$
i.e. the remainder term after $b$-th boosting iteration. Further, consider the $b$-th iteration utilizing $p+1$ decision trees whose disjoint terminal nodes are $R^j_1,\dots, R^j_m \in \mathcal{R}$ for $j=0,\dots,p$ respectively. (\ref{fml:treeupdate}) is equivalent to the following expression for the boosting update of the remainder $\hat{r}_b$ 
\begin{align}
\label{eqn:improve}
\hat{r}_{b+1} = \hat{r}_b - \lambda \sum_{i=1}^m \sum_{j=0}^p \frac{n^{-1}\inner{\hat{r}_b}{g_{{R^j_i}, j}}_n}{n^{-1}\norm{h_{R^j_i}}_n^2} g_{{R^j_i}, j}, 
\end{align}
or, for simplicity, we flatten the subscripts when there is no ambiguity such that
$$
\hat{r}_{b+1} = \hat{r}_b - \lambda \sum_{i=1}^{m(p+1)} \frac{n^{-1}\inner{\hat{r}_b}{g_{b,i}}_n}{n^{-1}\norm{h_{b,i}}_n^2} g_{b,i},
$$
\nblue{
where as defined above, $g_{b,i} = g_{{R^j_i}, j} = \I{z \in R^j_i}\cdot x^j$. }
Although the update involves $m(p+1)$ terms, only $p+1$ of them are applicable for a given $(x,z)$ pair as the result of using disjoint terminal nodes. 

Further, \citet{mallat1993matching} and \citet{buhlmann2002consistency} suggested that we consider the population counterparts of these processes defined by the remainder functions starting with $r_0 = f$ and 
\begin{align}
\label{eqn:popboost}
r_{b+1} = r_b - \lambda \sum_{i=1}^{m(p+1)} \frac{\inner{r_b}{g_{b,i}}}{\norm{h_{b,i}}^2} g_{b,i},
\end{align}
with the same boosted trees used. \nblue{They concluded that these processes converge to the consistent estimate in the completion of the decision tree family $\mathcal{T}$. As a result, we can achieve asymptotic consistency as long as the gap between the sample process and this population process diminishes fast enough along with the increase of sample size.}
\subsection{Consistency} 
Lemma \ref{lemma:donsker} helps to quantify the discrepancy between tree boosted VCM fits and their population versions conditioned on the sequence of trees used during boosting by decomposing a decision tree having terminal nodes in $\mathcal{R}$ into several hyper rectangles. This strategy also applies to tree boosted VCM. To further achieve consistency, we pose several additional conditions. 
\begin{enumerate}[(C1)]
\item In practice we require the learning rate $\lambda$ to satisfy that $\lambda \leq (1+p)^{-1}$, while in proofs we use  $\lambda = (1+p)^{-1}$.
\item All terminal nodes of the trees in the ensemble should have at least $N_n$ observations such that
$N_n \geq O\left(n^{\frac{3}{4}+\eta}\right)$ for some small $\eta > 0$, in which case we will have 
$$
\frac{1}{n}\norm{h_R}^2_n = \frac{1}{n} \sum_{i=1}^n \I{z_i \in R} \geq O\left(n^{-\frac{1}{4}+\eta}\right) 
$$
for all $R \in \mathcal{R}$ that appear as terminal nodes in the ensemble. 
\item We apply early stopping, allowing at most $B = B(n) = o(\log n)$ iterations during boosting. 
\item From the optimization perspective, we also require that trees in the ensemble have terminal nodes that effectively reduce the empirical risk. Consider the best functional rectangular fit during the $b$-th population iteration 
$$
g^* = \arg\max_{g \in \mathcal{G}} \frac{|\inner{r_b}{g}|}{\norm{g}^2}.
$$ 
We expect to empirically select at least one $(R^*, j)$ pair during the iteration to approximate $g^*$ such that 
$$
\frac{|\inner{r_b}{g_{R^*, j}}|}{\norm{g_{R^*,j}}^2} > \nu \cdot  \frac{|\inner{r_b}{g^*}|}{\norm{g^*}^2},
$$ 
for some $0 < \nu < 1$. \nblue{Lemma \ref{lemma:donsker} indicates that by choosing the sample version optimum 
$$
\hat{g}^* = \arg\max_{g \in \mathcal{G}} \frac{|\inner{r_b}{g}_n|}{\norm{g}_n^2},
$$
the above requirement can be hold true in probability for a fixed number of iterations. 
}
\item $\norm{f}^2 = M \leq \infty$. In addition, due to the linear models in the VCM, to achieve consistency we require that $f \in \mbox{span}(\mathcal{G})$. 
\item We also require the identifiability of linear models such that the distribution of $X$ conditioned on any choice of $Z=z$ should spread uniformly, i.e. 
$$
\inf_{R \in \mathcal{R}, j=0,\dots, p} \frac{\norm{g_{R, j}}}{\norm{h_R}} = \alpha_0 > 0.
$$
\item A stronger version of (C6) is to assume the existence of $s>0, c>0$ s.t. $\forall z$ a.e., there exists an open ball $B_z(x_0, s) \in [-1,1]^p$ centered at $x_0 = x_0(z)$ inside of which $P(X=(1,x)|Z=z)$ is bounded below by $c$. In other words, conditioned on any choice of $Z=z$ there is enough spreading sample points in an open region of $X$ that assures model identifiability. 
\end{enumerate}

\nblue{
During local gradient descent, unwanted behaviors can take place when there is local dependent relation between $X$ and $Z$ in the vicinity of some $Z=z$. Extreme cases include $P(X^1=X^2|Z=z) = 1$, two covariates being collinear, or $P(X^1=x | Z=z) = 1$, some covariates having degenerate conditional distributions. These cases prevent the local parametric model from being identifiable, and the introduction of (C6) and (C7) avoids those cases.
}

\begin{theorem} Under conditions (C1)-(C5), consider function $f \in \mbox{span}(\mathcal{G})$,
\label{thm:consistency}
$$
\Ex{(x^*,z^*)}{|\hat{\beta}_B(z^*)^Tx^* - f(x^*,z^*)|^2} = o_p(1), n \to \infty,
$$
for making predictions at a random point $(x^*,z^*)$ which are independent from but identically distributed as the training data. 
\end{theorem}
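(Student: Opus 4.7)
My plan is to follow the classical matching-pursuit / $L^2$-boosting consistency argument of \citet{buhlmann2002consistency}, adapted to the richer dictionary $\mathcal{G}$ used in VCM. The target expectation can be rewritten as $\norm{\hat{r}_B - \epsilon}^2$ (up to the noise-only part which contributes nothing to the prediction error since $(x^*,z^*)$ is independent of the training noise), so it suffices to show $\norm{\hat{r}_B - \epsilon}^2 = o_p(1)$. I would split this into a \emph{population} piece $\norm{r_B}^2 \to 0$ driven by the recursion (\ref{eqn:popboost}), and a \emph{sample-to-population gap} $\norm{\hat{r}_B - r_B - \epsilon}^2 = o_p(1)$ controlled uniformly via Lemma \ref{lemma:donsker}.

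For the population piece, I would show that the update (\ref{eqn:popboost}) is a greedy-type contraction in $\mathcal{G}$. Using (C1), the sum of $p+1$ projections onto orthogonal coordinate directions inside disjoint rectangles contracts the norm: for any terminal node $R$ with the induced restrictions $g_{R,j}$, the rank-$(p+1)$ projection is bounded by the identity because $\lambda(1+p) \leq 1$, so $\norm{r_{b+1}}^2 \leq \norm{r_b}^2 - \lambda \sum_{i}\inner{r_b}{g_{b,i}}^2/\norm{h_{b,i}}^2$. Condition (C4) together with (C6) guarantees that the chosen rectangle-direction achieves a constant fraction $\nu\alpha_0^{-2}$ of the maximum normalized inner product over $\mathcal{G}$. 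Since $f \in \mbox{span}(\mathcal{G})$, standard matching-pursuit arguments \citep{mallat1993matching} yield $\norm{r_b}^2 \to 0$, and in fact one obtains a rate that is at worst $O(1/b)$, which is benign compared to the $B = o(\log n)$ iterations allowed by (C3).

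For the sample-to-population gap, I would proceed by induction on $b$. Writing the empirical update (\ref{eqn:improve}) and subtracting (\ref{eqn:popboost}) for the same trees, the difference $\hat{r}_{b+1} - r_{b+1} - \epsilon$ decomposes into three types of terms: (i) the propagated previous gap $\hat{r}_b - r_b - \epsilon$; (ii) numerator perturbations of the form $n^{-1}\inner{\hat{r}_b}{g_{b,i}}_n - \inner{r_b}{g_{b,i}}$, which Lemma \ref{lemma:donsker} (parts 3, 4, 5) bounds uniformly by $\Op{n^{-1/2}}$ plus terms linear in the previous gap; and (iii) denominator perturbations $n^{-1}\norm{h_{b,i}}_n^2$ versus $\norm{h_{b,i}}^2$, handled by part 1 of Lemma \ref{lemma:donsker}. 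The crucial use of (C2) is that $n^{-1}\norm{h_{b,i}}_n^2 \geq O(n^{-1/4+\eta})$, so dividing by this empirical denominator inflates the $\Op{n^{-1/2}}$ numerator fluctuations by at most $O(n^{1/4-\eta})$, giving a per-iteration error of order $\Op{n^{-1/4-\eta}}$. Iterating this for $B = o(\log n)$ steps, even with a Lipschitz constant bounded by some fixed $K > 1$, gives cumulative gap at most $K^{B}\cdot \Op{n^{-1/4-\eta}} = n^{o(1)} \cdot \Op{n^{-1/4-\eta}} = o_p(1)$.

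The main obstacle I expect is the careful bookkeeping in step (iii): the empirical denominator $n^{-1}\norm{h_{b,i}}_n^2$ is selected adversarially by the tree-building algorithm (it depends on the data), so the uniform control over $\mathcal{R}$ provided by Lemma \ref{lemma:donsker} is essential, and one must verify that this uniformity is preserved across the $B$ iterations when the trees themselves are data-dependent. A related subtle point is justifying (C4) in probability: one must argue via Lemma \ref{lemma:donsker} that the empirical maximizer $\hat{g}^*$ achieves a fraction $\nu$ of the population maximum, uniformly over the $B$ iterations, which again relies on the $o(\log n)$ bound to apply a union bound. Once these uniform controls are in hand, combining the vanishing population remainder with the $o_p(1)$ empirical gap, and applying Lemma \ref{lemma:donsker} part 6 to transfer from empirical $\norm{\cdot}_n^2/n$ back to population $\norm{\cdot}^2$, yields the claimed consistency.
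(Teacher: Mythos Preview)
Your overall strategy---split into population convergence $\norm{r_B}\to 0$ plus a sample-to-population gap controlled via Lemma~\ref{lemma:donsker}, with the $B=o(\log n)$ early-stopping condition absorbing geometric amplification---is exactly the paper's approach.

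There is, however, one genuine missing ingredient. When you say that Lemma~\ref{lemma:donsker} part~3 bounds the numerator perturbation $n^{-1}\inner{r_b}{g_{b,i}}_n - \inner{r_b}{g_{b,i}}$ uniformly by $O_p(n^{-1/2})$, recall that part~3 is stated for a \emph{fixed} $L^2$ function, whereas $r_b$ depends on the data through all previously selected tree partitions. The paper closes this by first proving a sup-norm doubling bound $\sup_{x,z}|r_{b}|\le 2^b\sup_{x,z}|f|$ (Lemma~\ref{lemma:bound}), then recursively unrolling $r_b$ back to $f$ plus a linear combination of earlier $g_{j,i}$'s whose coefficients are bounded by $\sup|r_j|$; Lemma~\ref{lemma:donsker} parts~3 and~5 are then applied term by term, at the cost of an extra $2^b m$ factor (with $m$ the number of leaves, itself $\le (h^2-\xi_n)^{-1}$). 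Your ``per-iteration error of order $O_p(n^{-1/4-\eta})$'' is therefore too optimistic: the correct per-step contribution is of order $2^b m\,\xi_n/h^2$, growing in $b$. The conclusion still survives because $(2(p+1))^B = n^{o(1)}$ under (C3), but your fixed-$K$ Lipschitz recursion alone does not deliver this without Lemma~\ref{lemma:bound}.

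Two smaller corrections. The directions $g_{R,j}$, $j=0,\dots,p$, are not orthogonal within a fixed rectangle $R$, so the contraction $\norm{r_{b+1}}\le\norm{r_b}$ is not a rank-$(p+1)$ projection bound; the paper obtains it by the triangle inequality across the $p+1$ coordinates, using $\lambda(p+1)=1$ together with disjointness of the leaves for each fixed $j$. And you invoke (C6) for the population piece, but the theorem assumes only (C1)--(C5); the paper handles population convergence by citing Lemma~3 of \citet{buhlmann2002consistency} under (C4) and $f\in\mbox{span}(\mathcal{G})$, establishing only that $\norm{r_{B_0}}\le\rho$ for $B_0=B_0(\rho)$ large enough (plus monotonicity), with no $O(1/b)$ rate claimed or needed.
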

\begin{corollary} If we further assume (C6),
\label{cor:consistency}
$$
\hat{\beta}_B(z^*) \xlongrightarrow{p} \beta(z^*), n \to \infty.
$$
\end{corollary}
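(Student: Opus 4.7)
The plan is to convert the $L^2$ prediction consistency from Theorem \ref{thm:consistency} into coefficient convergence by exploiting the identifiability supplied by (C6). Since (C5) forces $f \in \mbox{span}(\mathcal{G})$, the true regression function admits a linear representation $f(x,z) = \beta(z)^T x$. Writing $\hat{\delta}(z) = \hat{\beta}_B(z) - \beta(z)$ and iterating the expectation in Theorem \ref{thm:consistency} over $x^*$ given $z^*$, I get
$$
\mathbb{E}_{z^*}\bigl[\hat{\delta}(z^*)^T \Sigma(z^*)\, \hat{\delta}(z^*)\bigr] = o_p(1),
$$
where $\Sigma(z) = \mathbb{E}[XX^T \mid Z=z]$ is the conditional second-moment matrix.

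Second, I would use (C6) to turn this quadratic-form bound into a bound on $\|\hat{\delta}(z^*)\|^2$. The ratio $\|g_{R,j}\|^2/\|h_R\|^2 = \mathbb{E}[(X^j)^2 \mid Z \in R] \geq \alpha_0^2$, together with Lebesgue differentiation as $R$ shrinks around $z$, produces $\Sigma(z)_{jj} \geq \alpha_0^2$ for $P_Z$-almost every $z$; the spreading content of (C6)—made fully explicit by (C7) through a lower-bounded density on a ball $B_z(x_0, s) \subset \mbox{supp}(X \mid Z = z)$—then extends the bound from individual coordinates to arbitrary directions and yields $\Sigma(z) \succeq c_0 I$ uniformly for some $c_0 > 0$. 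Substituting,
$$
\mathbb{E}_{z^*}\bigl[\|\hat{\delta}(z^*)\|^2\bigr] \leq c_0^{-1}\, \mathbb{E}_{z^*}\bigl[\hat{\delta}(z^*)^T \Sigma(z^*)\, \hat{\delta}(z^*)\bigr] = o_p(1).
$$

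Finally, Markov's inequality on the product probability space of the training sample and an independent test point $z^*$ gives $P(\|\hat{\delta}(z^*)\| > \epsilon) \leq \epsilon^{-2}\, \mathbb{E}_{z^*}[\|\hat{\delta}(z^*)\|^2] = o_p(1)$ for any $\epsilon > 0$, which is the claimed $\hat{\beta}_B(z^*) \xlongrightarrow{p} \beta(z^*)$.

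The main obstacle is the spectral gap in the second step: (C6) literally controls only diagonal entries of $\Sigma(z)$, so upgrading to $\Sigma(z) \succeq c_0 I$ either requires reading the ``uniform spreading'' clause of (C6) as applying to arbitrary directions $v \in \mathbb{R}^{p+1}$, or directly invoking (C7), whose positive-density ball immediately delivers a uniform spectral lower bound by integrating $v^T x x^T v$ against the bounded-below density. Without such a bound an $L^2$ prediction error could in principle coexist with large coefficient error along near-collinear directions of $X \mid Z$, so this is the step that deserves the most care; the remaining Fubini-type exchange between the $o_p(1)$ over training data and the expectation over $z^*$, and the concluding Markov step, are routine.
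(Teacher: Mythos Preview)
Your approach is correct and matches the paper's in substance: both convert the $L^2$ prediction consistency into coefficient consistency via the identifiability condition, and the paper's proof in fact invokes the ball $B_{z_0}(x_0,s)$ from (C7)---not merely (C6)---despite the corollary's wording, just as you anticipated. The only cosmetic difference is that the paper argues by contradiction, fixing a $z_0$ with $\|\hat{\delta}(z_0)\|>\epsilon_0$ and directly integrating $(\hat{\delta}(z_0)^Tx)^2$ over the ball to produce a lower bound $c\min(v_0,t_0)\epsilon_0$ on the prediction error, which is your spectral bound $\Sigma(z)\succeq c_0 I$ unwound at a single point; your direct route through $\Sigma(z)$ and Markov is equivalent and arguably cleaner.
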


Corollary \ref{cor:consistency} justifies the varying coefficient mappings as valid estimators for the true varying linear relationship. Although we have not explicitly introduced any continuity condition on $\beta$, it is worth noticing that (C5) requires $\beta$ to have relatively invariant local behavior. Although one region in $\mathbb{A}$ of any size can be eventually detected by the growing $n$ to fit into a terminal node with sufficient sample points required by (C2), such rate is too loose to guarantee the detection of a small area with a small sample. As a result, tree boosted VCM should be the most ideal when $\mathbb{A}$ is heterogeneous with a few big and flat regions. When we consider the interpretability of tree boosted VCM, consistency is also the sufficient theoretical guarantee for \textit{local fidelity} discussed in \citet{ribeiro2016should} that an interpretable local method should also yield accurate local relation between covariates and responses. 

\section{More Empirical Study} 
\subsection{Identifying Signals} \nblue{Our theory suggests that tree boosted VCM is capable of identifying local linear structures and their coefficients accurately. To demonstrate this in practice, }we apply it to the following regression problem with higher order feature interaction on the action space. \begin{gather*}
z = (z^1, z^2, z^3, z^4), \quad z^1, z^2 \sim \mbox{Unif}\{1,\dots, 10\}, \quad z^3, z^4 \sim \mbox{Unif}[0,1].\\
x \in \mathbb{R}^{7}, \quad x \sim N(0, I_{7}), \quad \epsilon \sim N(0, 0.25).
\end{gather*}
The data generating process is describe by the following pseudo code.
\begin{align*}
&\mbox{if } z^1 < 4: & y =  1 + 3 x^1 + 7 x^2 \\
&\mbox{else if } z^1 > 8: & y = -5 + 2x^1 + 4x^2 + 6x^3\\
&\mbox{else if } z^2 = 1, 3 \mbox{ or } 5: & y = 5 + 5x^2 + 5x^3\\
&\mbox{else if } z^3 < 0.5: & y = 10 + 10x^4 \\
&\mbox{else if } z^4 < 0.4: & y = 10+10x^5 \\
&\mbox{else if } z^3 < z^4: & y = 5-5x^2 - 10x^3 \\
&\mbox{else}: & y = -10 x^1 + 10x^3 
\end{align*}

We utilize a sample of size $10,000$ and use 100 trees of maximal depth of 6 for boosting with constant learning rate of $0.2$. Figure \ref{fig:simulated_highorder} plots the fitted distribution of each coefficient in red against the ground truth in grey, \nblue{with reported MSE 3.28. We observe that all peaks and their intensities properly reflect the coefficient distributions on the action space. Despite the linear expressions, we have tested interaction among all four action covariates of a tree depth of 6 and have not yet achieved convergence, which we conclude as the reasons for large MSE. It manifests the effectiveness of our straightforward implementation of decision trees segmenting the action space.}
\begin{figure}[h] 
   \centering
   \includegraphics[width=0.24\textwidth]{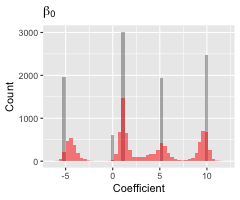} 
   \includegraphics[width=0.24\textwidth]{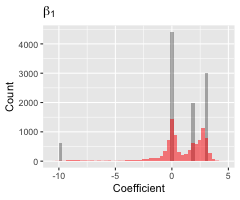}
   \includegraphics[width=0.24\textwidth]{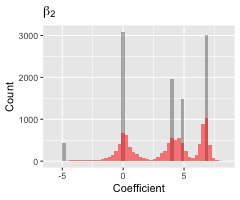}
   \includegraphics[width=0.24\textwidth]{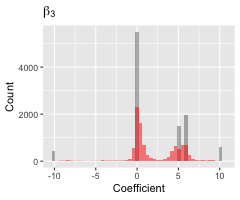}\\
   \includegraphics[width=0.24\textwidth]{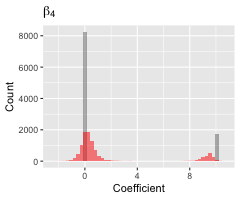}
   \includegraphics[width=0.24\textwidth]{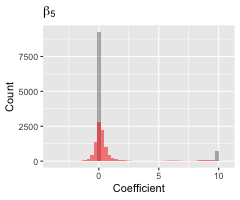} 
   \includegraphics[width=0.24\textwidth]{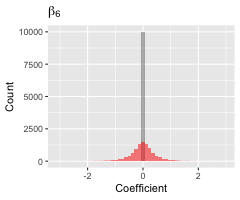} 
   \includegraphics[width=0.24\textwidth]{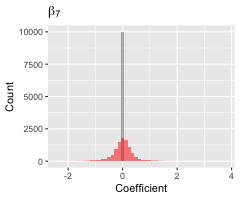} 
   \caption{Histograms of distributions of fitted coefficient values. Color code: ground truth (grey) and tree boosted VCM (red).}
   \label{fig:simulated_highorder}
\end{figure}

\subsection{Model Accuracy}
\nblue{To show the accuracy of our proposed methods, we have selected 12 real world datasets and run tree boosted VCM (marked as TVCM) against other benchmark methods. Table \ref{tab:ce} demonstrates the results under classification settings with three benchmarks: GLM as logistic regression, GLM(S) as a partially saturated logistic regression model where each combination of discrete action covariates acts as fixed effect with its own level, and AdaBoost. Table \ref{tab:l2} demonstrates the results under regression settings. The three benchmarks we choose here are: LM as linear model, LM(S) as a partially saturated linear model, and GBM as the gradient boosted trees. Although with additional structural assumptions, tree boosted VCM performs nearly on a par with both GBM and AdaBoost. It benefits from its capability of modeling the action space without structural conditions to outperform the fixed effect linear model in certain cases. 

\begin{table}[htbp]
   \centering
   \begin{tabular}{cccccc} 
      \toprule
      NAME & GLM & GLM(S) & ADABOOST & TVCM \\
      \midrule
MAGIC04  & 0.208(0.007) & 0.209(0.0065) & 0.13(0.0076) & 0.209(0.0072) \\ 
BANK  & 0.111(0.0044) & 0.1(0.0044) & 0.098(0.0035) & 0.114(0.0043) \\ 
OCCUPANCY  & 0.014(0.0042) & 0.0129(0.0034) & 0.00567(0.0016) & 0.0126(0.0048) \\ 
SPAMBASE  & 0.0749(0.011) & 0.0732(0.012) & 0.0564(0.0098) & 0.0616(0.0097)  \\ 
ADULT  & 0.188(0.004) & 0.155(0.0046) & 0.136(0.0049) & 0.154(0.0028)  \\ 
EGRIDSTAB  & 0.289(0.014) & 0.227(0.018) & 0.179(0.009) & 0.177(0.015) \\ 
       \bottomrule
   \end{tabular}
  \caption{Prediction accuracy of classification and 0-1 loss for six UCI data sets through tenfold cross validation. Results are shown as mean(sd). Sources of some datasets are: BANK\citep{moro2014data} and OCCUPANCY\citep{candanedo2016accurate}.}
   \label{tab:ce}
\end{table}
}

\begin{table}[htbp]
   \centering
   \begin{tabular}{ccccc} 
      \toprule
      NAME & LM & LM(S) & GBM & TVCM \\
      \midrule
BEIJINGPM  & 6478(227) & 5041(203) & 3465(176) & 3942(178) \\ 
BIKEHOUR  & 24590(1630) & 12190(818) & 5791(419) & 6596(597) \\ 
STARCRAFT  & 1.135(0.0622) & 1.116(0.0645) & 1.045(0.0594) & 1.161(0.0596) \\ 
ONLINENEWS  & 0.8544(0.0331) & 0.8377(0.0328) & 0.7826(0.0298) & 0.8183(0.0337) \\ 
ENERGY  & 18.01(4.42) & 9.801(2.16) & 0.5633(0.162) & 9.864(2.27) \\ 
EGRIDSTAB  &  1.01e-03(4.5e-05) & 6.92e-04(3e-05) & 4.31e-04(1.4e-05) & 4.27e-04(8.3e-06) \\ 
       \bottomrule
   \end{tabular}
   \caption{Prediction accuracy of regression and mean square error for six UCI data sets through tenfold cross validation. Results are shown as mean(sd). Sources of some datasets are: BEIJINGPM\citep{liang2015assessing}, BIKEHOUR\citep{fanaee2014event}, ONLINENEWS\citep{fernandes2015proactive} and ENERGY\citep{tsanas2012accurate}.}
   \label{tab:l2}
\end{table}

\subsection{Visual Interpretability: Beijing Housing Price} Here we show the results of applying tree boosted VCM on the Beijing housing data \citep{Beijing}. We take the housing unit price as the target regressed on covariates of location, floor, number of living rooms and bathrooms, whether the unit has an elevator and whether the unit has been refurbished. Specially, location has been treated as the action space represented in pairs of longitude and latitude. Location specific linear coefficients of other covariates are displayed in Figure \ref{fig:beijinghousing}. We allow 200 trees of depth of 5 in the ensemble with a constant learning rate of 0.05. 

\begin{figure}[htbp] 
   \centering
	\includegraphics[width=0.325\linewidth]{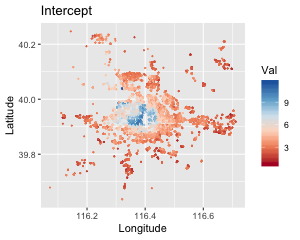} 
	\includegraphics[width=0.325\linewidth]{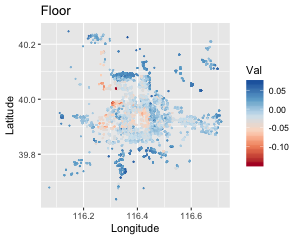} 
	\includegraphics[width=0.325\linewidth]{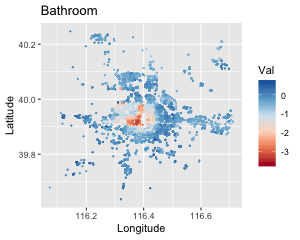} \\ 
	\includegraphics[width=0.325\linewidth]{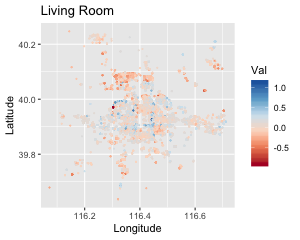}  
	\includegraphics[width=0.325\linewidth]{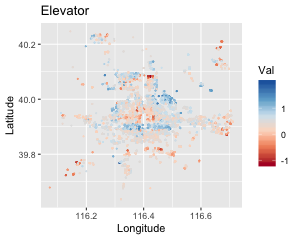} 
	\includegraphics[width=0.325\linewidth]{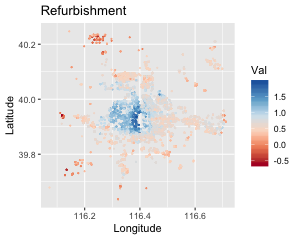} 
   \caption{Beijing housing unit price broken down on several factors.}
   \label{fig:beijinghousing}
\end{figure}
The urban landscape of Beijing is pictured by its old inner circle with a low skyline gradually transitioning to its modern outskirt rim of skyscrapers housing a young and new workforce. Our model intercept provides the baseline of the unit housing prices in each area. Despite their high values, most buildings inside the inner circle are old and not suitable for replanning, so elevators and number of bathrooms are of low contribution to the final price, while their refurbishment gets more attention. In contrast, outskirt housing gains more value if the unit has a complementary elevator and is on higher floor.  

Figure \ref{fig:beijinghousing} provides clear visualization of the fitted tree boosted VCM. Usually these irregular patterns are more likely to be outputs of nonparametric models, while behind each point on our plot is a location-specific linear model predicting the housing price breaking down to different factors.  

\subsection{Fitting Other Model Class} \nblue{As mentioned, since VCM is the generalization of many specific models, our proposed fitting algorithm and analysis should apply to them as well. We take partially linear models as an example and consider the following data set from Cornell Lab of Ornithology consisting of the recorded observations of four species of vireos along with the location and surrounding terran types. We apply a tree boosted VCM under logistic regression setting using the longitude and the latitude as the action space and all rest covariates as linear effects, obtaining the model demonstrated by Table \ref{tab:plinear}. The intercept plot suggests the trend of observed vireos favoring cold climate and inland environment, while the slopes of different territory types indicate a strong preference towards the low elevation between deciduous forests and evergreen needles.

\begin{table}[h]
   \begin{minipage}[]{0.32\textwidth}
   \includegraphics[width=\textwidth]{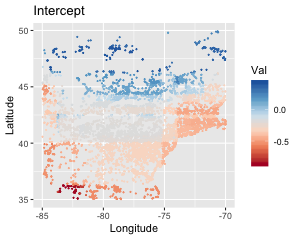}  
   \label{fig:orni}
   \end{minipage}
   \begin{small}
   \begin{minipage}[]{0.66\textwidth}
   \begin{tabular}{@{} cccc@{}} 
      \toprule
      Covariate & Slope & Covariate & Slope \\
      \midrule 
      Elevation & 9.65e-04 & Barren& -1.59e+03 \\
      Shallow Ocean & -1.88e+03 &  Evergreen Broad & 2.77e+02  \\
      CoastShore Lines &  -6.51e+01 & Deciduous Needle &   2.57e+02 \\
      Shallow Inland &   9.39e+01 &  Deciduous Broad &   2.72e+02 \\
      Moderate Ocean&  -1.18e+03  & Mixed Forest &  7.32e+01\\
      Deep Ocean &   -5.12e+03 & Closed Shrubland &   -1.19e+03 \\
      Evergreen Needle &   -4.54e+02 &  Open Shrubland &  8.60e+01 \\
      Grasslands&   -4.49e+02  & Woody Savannas &  -5.75e+02\\
      Croplands & -4.29e+02 & Savannas & -7.46e+02   \\ 
      Urban Built & -6.62e+02 & & \\
      \bottomrule
   \end{tabular}
 \end{minipage}
 \end{small}
 \label{tab:plinear}
 \caption{Fitting a partially linear model using tree boosted VCM. Plot on the left shows the nonparametric intercept. Table on the right shows the coefficients of predictive covariates.}
\end{table}
}

\section{Shrinkage, Selection and Serialization}
Tree boosted VCM is compatible with any alternative boosting strategy in place of the boosting steps (B3) and (B4), such as the use of subsampled trees \citep{friedman2002stochastic, zhou2018boulevard}, univariate or bivariate trees \citep{lou2012intelligible, hothorn2013mboost} or adaptive shrinkage (dropout) \citep{rashmi2015dart, rogozhnikov2017infiniteboost, zhou2018boulevard}. While these alternative approaches have been empirically shown to help avoid overfitting or provide more model interpretability, we also anticipate that the corresponding varying coefficient mappings would inherit certain theoretical properties. For instance, \citet{zhou2018boulevard} introduced a regularized boosting framework named Boulevard that guarantees finite sample convergence and asymptotic normality of its predictions. Incorporating Boulevard into our tree boosted VCM framework requires the changes to (B3) and (B4) such that
\begin{enumerate}
\item [(B3*)] For each $j$, find a good fit $t_{b+1}^j \in \mathcal{T}: \mathbb{A} \to \mathbb{R}$ on $(z_i, \Delta^j_{\beta_i})$ for $i \in w \subset \{1,\dots,n\}$ a random subsample.
\item [(B4*)] Update $\hat{\beta}_b$ with learning rate $\lambda < 1$.
\begin{gather*}
\hat{\beta}_{b+1}(\cdot) = \frac{b}{b+1}\hat{\beta}_{b}(\cdot) + \frac{\lambda}{b+1} \Gamma_M\left(\begin{bmatrix}t_{b+1}^0(\cdot) \\ \vdots \\ t_{b+1}^p(\cdot)\end{bmatrix}\right),
\end{gather*}
where $\Gamma_M$ truncates the absolute value at some $M > 0$.
\end{enumerate}
By taking the same approach in the original paper, we can show that boosting VCM with Boulevard will also yield finite sample convergence to a fixed point. 

\nblue{Boulevard modifies the standard boosting strategy to the extent that new theoretical results have to be developed specifically. In contrast, there are other boosting variations that fall directly under the theoretical umbrella of tree boosted VCM.} Our discussion so far assumes we run boosting iterations with a distinct tree built for each coefficient component while these trees are simultaneously constructed using the same batch of pseudo-residuals. Despite the possibility to utilize a single decision tree with multidimensional response to produce all components, as long as we build separate trees sequentially, the question arises that whether we should update the pseudo-residuals on the fly.

One advantage of doing so is the minimized boosting iteration from $(1+p)$ trees down to one tree, allowing us to use much larger learning rate $\lambda \leq 1/2$ instead of $\lambda \leq (1+p)^{-1}$ without changing the arguments we used to establish the consistency. We also anticipate that doing so in practice moderately reduces the cost as the gradients become more accurate for each tree. Here we will consider two approaches to conduct the on-the-fly updates.

In \citet{hothorn2013mboost} the authors proposed the component-wise linear least squares for boosting where they select which $\beta$ to update using the stepwise optimal strategy, i.e., choose $j_b$ and update $\beta^{j_b}$ if
$$
j_b = \arg\min_{j=0,\dots,p} \sum_{i=1}^n l(y_i, x_i^T(\hat{\beta}_b + \lambda t_b^{j}e_j)(z_i)),
$$
the component tree that reduces the empirical risk the most. As a result, (B4) in Algorithm now updates 
\begin{gather*}
\hat{\beta}_{b+1} = \hat{\beta}_{b} + \lambda t_{b+1}^{j_b} e_{j_b}.
\end{gather*}
Notice that finding this optimum still requires the comparison among all components, therefore does not save any training cost when there are no better means or prior knowledge to help detect which component stands out. \nblue{That being said,} the optimal move is compatible with the key condition (C4) we posed to ensure consistency. Namely, it still guarantees that the population counterpart of boosting is efficient in reducing the gap between the estimate and the truth.  However, this greedy strategy also complicates the pattern of the sequence in which $\beta$'s get updated. 

 \nblue{Serialization refers to the cases when the $\beta$'s are being updated in some predetermined order. A similar model is covered by \citet{lou2012intelligible, lou2013accurate} where the authors applied univariate generalized additive models (GAM) to perform model distillation, which was refined in \citet{tan2017distill} using decision trees. Their models can either be built through backfitting which eventually produces one additive component for each covariate, or through boosting that generates a sequence of additive terms. 
 
Applying the rotation of coordinates to tree boosted VCM, we can break each of the original boosting iterations into (p+1) micro steps to write
$$
\hat{\beta}_{b, j} = \hat{\beta}_{b, j-1} - \lambda \nabla_{\beta^j} l(y, x^T\beta) \Big|_{\beta = \hat{\beta}_{b, j-1} (z)},
$$
with j rotating through $0,\dots, p$. This procedure immediately updates the pseudo-residuals after each component tree is built. There are two feasible approaches if we intend to employ tree boosted VCM to achieve the same univariate GAM model. Either we can place all covariates into the action space and use only univariate decision trees to perform the serialized boosting, or we can directly apply tree boosted VCM to get additive models that are univariate with respect to the predictive covariates.

However, this procedure is not compatible with our consistency conclusion as the serialized boosting fails to guarantee (C4): each micro boosting step on a single coordinate relies on the current pseudo gradients instead of the gradients before the entire rotation. One solution is to consider an alternative to the determined updating sequence by randomly and uniformly proposing the coordinate to boost. In this regard,
$$
\hat{\beta}_{b} = \hat{\beta}_{b} - \lambda \nabla_{\beta^j} l(y, x^T\beta) \Big|_{\beta = \hat{\beta_{b}} (z)},
$$
where $j \sim \mbox{Unif}\{0,\dots, p\}$. This stochastic sequence solves the compatibility issue by satisfying (C4) with a probability bounded from below.
}
\section{Conclusion} In this paper we studied the combination of gradient boosting and varying coefficient models. We introduced the algorithm of fitting tree boosted VCM under generalized linear model setting. We have proved the consistency of its coefficient estimators with an early stopping strategy. We have also demonstrated the performance of this method through diverse types of empirical studies. As a result we are confident to add gradient boosted decision trees into the collection of apt varying coefficient mappings. 

We also discussed the model interpretability of tree boosted VCM, especially its parametric part as a self-explanatory component of reasoning about the covariates. Consistency guarantees the effectiveness of the interpretation. We also demonstrated a few visual aid for explaining tree boosted VCM.

There are a few future directions of this research. The performance of tree boosted VCM depends on boosting scheme as well as the component decision trees involved. It is worth analyzing the discrepancies between different approaches of building the boosting tree ensemble. Also, our discussions in the paper so far concentrate on the generic VCM. As mentioned, multiple regression and additive models can be considered as specifications of generic VCM, hence we expect model-specific conclusions and training algorithms.

\bibliographystyle{chicago}
\bibliography{cite}

\begin{thebibliography}{}

\bibitem[\protect\citeauthoryear{Basu, Kumbier, Brown, and Yu}{Basu
  et~al.}{2018}]{basu2018iterative}
Basu, S., K.~Kumbier, J.~B. Brown, and B.~Yu (2018).
\newblock Iterative random forests to discover predictive and stable high-order
  interactions.
\newblock {\em Proceedings of the National Academy of Sciences\/}, 201711236.

\bibitem[\protect\citeauthoryear{Berger, Tutz, and Schmid}{Berger
  et~al.}{2017}]{berger2017tree}
Berger, M., G.~Tutz, and M.~Schmid (2017).
\newblock Tree-structured modelling of varying coefficients.
\newblock {\em Statistics and Computing\/}, 1--13.

\bibitem[\protect\citeauthoryear{Breiman, Friedman, Olshen, and Stone}{Breiman
  et~al.}{1984}]{breiman1984classification}
Breiman, L., J.~H. Friedman, R.~A. Olshen, and C.~J. Stone (1984).
\newblock Classification and regression trees.

\bibitem[\protect\citeauthoryear{Buergin and Ritschard}{Buergin and
  Ritschard}{2017}]{buergin2017coefficient}
Buergin, R.~A. and G.~Ritschard (2017).
\newblock Coefficient-wise tree-based varying coefficient regression with
  vcrpart.
\newblock {\em Journal of Statistical Software\/}~{\em 80\/}(6), 1--33.

\bibitem[\protect\citeauthoryear{B{\"u}hlmann}{B{\"u}hlmann}{2002}]{buhlmann2002consistency}
B{\"u}hlmann, P.~L. (2002).
\newblock Consistency for l₂boosting and matching pursuit with trees and
  tree-type basis functions.
\newblock In {\em Research report/Seminar f{\"u}r Statistik, Eidgen{\"o}ssische
  Technische Hochschule (ETH)}, Volume 109. Seminar f{\"u}r Statistik,
  Eidgen{\"o}ssische Technische Hochschule (ETH).

\bibitem[\protect\citeauthoryear{Candanedo and Feldheim}{Candanedo and
  Feldheim}{2016}]{candanedo2016accurate}
Candanedo, L.~M. and V.~Feldheim (2016).
\newblock Accurate occupancy detection of an office room from light,
  temperature, humidity and co2 measurements using statistical learning models.
\newblock {\em Energy and Buildings\/}~{\em 112}, 28--39.

\bibitem[\protect\citeauthoryear{Chan and Loh}{Chan and
  Loh}{2004}]{chan2004lotus}
Chan, K.-Y. and W.-Y. Loh (2004).
\newblock Lotus: An algorithm for building accurate and comprehensible logistic
  regression trees.
\newblock {\em Journal of Computational and Graphical Statistics\/}~{\em
  13\/}(4), 826--852.

\bibitem[\protect\citeauthoryear{Chipman, George, McCulloch, and
  Shively}{Chipman et~al.}{2016}]{chipman2016high}
Chipman, H.~A., E.~I. George, R.~E. McCulloch, and T.~S. Shively (2016).
\newblock High-dimensional nonparametric monotone function estimation using
  bart.
\newblock {\em arXiv preprint arXiv:1612.01619\/}.

\bibitem[\protect\citeauthoryear{Fan, Huang, et~al.}{Fan
  et~al.}{2005}]{fan2005profile}
Fan, J., T.~Huang, et~al. (2005).
\newblock Profile likelihood inferences on semiparametric varying-coefficient
  partially linear models.
\newblock {\em Bernoulli\/}~{\em 11\/}(6), 1031--1057.

\bibitem[\protect\citeauthoryear{Fan, Zhang, et~al.}{Fan
  et~al.}{1999}]{fan1999statistical}
Fan, J., W.~Zhang, et~al. (1999).
\newblock Statistical estimation in varying coefficient models.
\newblock {\em The annals of Statistics\/}~{\em 27\/}(5), 1491--1518.

\bibitem[\protect\citeauthoryear{Fanaee-T and Gama}{Fanaee-T and
  Gama}{2014}]{fanaee2014event}
Fanaee-T, H. and J.~Gama (2014).
\newblock Event labeling combining ensemble detectors and background knowledge.
\newblock {\em Progress in Artificial Intelligence\/}~{\em 2\/}(2-3), 113--127.

\bibitem[\protect\citeauthoryear{Fernandes, Vinagre, and Cortez}{Fernandes
  et~al.}{2015}]{fernandes2015proactive}
Fernandes, K., P.~Vinagre, and P.~Cortez (2015).
\newblock A proactive intelligent decision support system for predicting the
  popularity of online news.
\newblock In {\em Portuguese Conference on Artificial Intelligence}, pp.\
  535--546. Springer.

\bibitem[\protect\citeauthoryear{Friedberg, Tibshirani, Athey, and
  Wager}{Friedberg et~al.}{2018}]{friedberg2018local}
Friedberg, R., J.~Tibshirani, S.~Athey, and S.~Wager (2018).
\newblock Local linear forests.
\newblock {\em arXiv preprint arXiv:1807.11408\/}.

\bibitem[\protect\citeauthoryear{Friedman}{Friedman}{2001}]{friedman2001greedy}
Friedman, J.~H. (2001).
\newblock Greedy function approximation: a gradient boosting machine.
\newblock {\em Annals of statistics\/}, 1189--1232.

\bibitem[\protect\citeauthoryear{Friedman}{Friedman}{2002}]{friedman2002stochastic}
Friedman, J.~H. (2002).
\newblock Stochastic gradient boosting.
\newblock {\em Computational Statistics \& Data Analysis\/}~{\em 38\/}(4),
  367--378.

\bibitem[\protect\citeauthoryear{Gama}{Gama}{2004}]{gama2004functional}
Gama, J. (2004).
\newblock Functional trees.
\newblock {\em Machine Learning\/}~{\em 55\/}(3), 219--250.

\bibitem[\protect\citeauthoryear{H{\"a}rdle, Liang, and Gao}{H{\"a}rdle
  et~al.}{2012}]{hardle2012partially}
H{\"a}rdle, W., H.~Liang, and J.~Gao (2012).
\newblock {\em Partially linear models}.
\newblock Springer Science \& Business Media.

\bibitem[\protect\citeauthoryear{Hastie and Tibshirani}{Hastie and
  Tibshirani}{1993}]{hastie1993varying}
Hastie, T. and R.~Tibshirani (1993).
\newblock Varying-coefficient models.
\newblock {\em Journal of the Royal Statistical Society. Series B
  (Methodological)\/}, 757--796.

\bibitem[\protect\citeauthoryear{Hothorn, B{\"u}hlmann, Kneib, Schmid, and
  Hofner}{Hothorn et~al.}{2013}]{hothorn2013mboost}
Hothorn, T., P.~B{\"u}hlmann, T.~Kneib, M.~Schmid, and B.~Hofner (2013).
\newblock mboost: Model-based boosting, 2012.
\newblock {\em URL http://CRAN. R-project. org/package= mboost. R package
  version\/}, 2--1.

\bibitem[\protect\citeauthoryear{Kaggle}{Kaggle}{2018}]{Beijing}
Kaggle (2018).
\newblock Housing price in beijing.
\newblock \url{https://www.kaggle.com/ruiqurm/lianjia/home}.

\bibitem[\protect\citeauthoryear{Liang, Zou, Guo, Li, Zhang, Zhang, Huang, and
  Chen}{Liang et~al.}{2015}]{liang2015assessing}
Liang, X., T.~Zou, B.~Guo, S.~Li, H.~Zhang, S.~Zhang, H.~Huang, and S.~X. Chen
  (2015).
\newblock Assessing beijing's pm2. 5 pollution: severity, weather impact, apec
  and winter heating.
\newblock {\em Proceedings of the Royal Society A: Mathematical, Physical and
  Engineering Sciences\/}~{\em 471\/}(2182), 20150257.

\bibitem[\protect\citeauthoryear{Lou, Caruana, and Gehrke}{Lou
  et~al.}{2012}]{lou2012intelligible}
Lou, Y., R.~Caruana, and J.~Gehrke (2012).
\newblock Intelligible models for classification and regression.
\newblock In {\em Proceedings of the 18th ACM SIGKDD international conference
  on Knowledge discovery and data mining}, pp.\  150--158. ACM.

\bibitem[\protect\citeauthoryear{Lou, Caruana, Gehrke, and Hooker}{Lou
  et~al.}{2013}]{lou2013accurate}
Lou, Y., R.~Caruana, J.~Gehrke, and G.~Hooker (2013).
\newblock Accurate intelligible models with pairwise interactions.
\newblock In {\em Proceedings of the 19th ACM SIGKDD international conference
  on Knowledge discovery and data mining}, pp.\  623--631. ACM.

\bibitem[\protect\citeauthoryear{Mallat and Zhang}{Mallat and
  Zhang}{1993}]{mallat1993matching}
Mallat, S. and Z.~Zhang (1993).
\newblock Matching pursuit with time-frequency dictionaries.
\newblock Technical report, Courant Institute of Mathematical Sciences New York
  United States.

\bibitem[\protect\citeauthoryear{Melis and Jaakkola}{Melis and
  Jaakkola}{2018}]{melis2018towards}
Melis, D.~A. and T.~Jaakkola (2018).
\newblock Towards robust interpretability with self-explaining neural networks.
\newblock In {\em Advances in Neural Information Processing Systems}, pp.\
  7786--7795.

\bibitem[\protect\citeauthoryear{Mentch and Hooker}{Mentch and
  Hooker}{2016}]{mentch2016quantifying}
Mentch, L. and G.~Hooker (2016).
\newblock Quantifying uncertainty in random forests via confidence intervals
  and hypothesis tests.
\newblock {\em The Journal of Machine Learning Research\/}~{\em 17\/}(1),
  841--881.

\bibitem[\protect\citeauthoryear{Moro, Cortez, and Rita}{Moro
  et~al.}{2014}]{moro2014data}
Moro, S., P.~Cortez, and P.~Rita (2014).
\newblock A data-driven approach to predict the success of bank telemarketing.
\newblock {\em Decision Support Systems\/}~{\em 62}, 22--31.

\bibitem[\protect\citeauthoryear{Park, Mammen, Lee, and Lee}{Park
  et~al.}{2015}]{park2015varying}
Park, B.~U., E.~Mammen, Y.~K. Lee, and E.~R. Lee (2015).
\newblock Varying coefficient regression models: a review and new developments.
\newblock {\em International Statistical Review\/}~{\em 83\/}(1), 36--64.

\bibitem[\protect\citeauthoryear{Rashmi and Gilad-Bachrach}{Rashmi and
  Gilad-Bachrach}{2015}]{rashmi2015dart}
Rashmi, K. and R.~Gilad-Bachrach (2015).
\newblock Dart: Dropouts meet multiple additive regression trees.
\newblock In {\em International Conference on Artificial Intelligence and
  Statistics}, pp.\  489--497.

\bibitem[\protect\citeauthoryear{Ribeiro, Singh, and Guestrin}{Ribeiro
  et~al.}{2016}]{ribeiro2016should}
Ribeiro, M.~T., S.~Singh, and C.~Guestrin (2016).
\newblock Why should i trust you?: Explaining the predictions of any
  classifier.
\newblock In {\em Proceedings of the 22nd ACM SIGKDD international conference
  on knowledge discovery and data mining}, pp.\  1135--1144. ACM.

\bibitem[\protect\citeauthoryear{Rogozhnikov and Likhomanenko}{Rogozhnikov and
  Likhomanenko}{2017}]{rogozhnikov2017infiniteboost}
Rogozhnikov, A. and T.~Likhomanenko (2017).
\newblock Infiniteboost: building infinite ensembles with gradient descent.
\newblock {\em arXiv preprint arXiv:1706.01109\/}.

\bibitem[\protect\citeauthoryear{Scornet}{Scornet}{2016}]{scornet2016random}
Scornet, E. (2016).
\newblock Random forests and kernel methods.
\newblock {\em IEEE Transactions on Information Theory\/}~{\em 62\/}(3),
  1485--1500.

\bibitem[\protect\citeauthoryear{Tan, Caruana, Hooker, and Lou}{Tan
  et~al.}{2017}]{tan2017distill}
Tan, S., R.~Caruana, G.~Hooker, and Y.~Lou (2017).
\newblock Distill-and-compare: Auditing black-box models using transparent
  model distillation.

\bibitem[\protect\citeauthoryear{Tsanas and Xifara}{Tsanas and
  Xifara}{2012}]{tsanas2012accurate}
Tsanas, A. and A.~Xifara (2012).
\newblock Accurate quantitative estimation of energy performance of residential
  buildings using statistical machine learning tools.
\newblock {\em Energy and Buildings\/}~{\em 49}, 560--567.

\bibitem[\protect\citeauthoryear{van~der Vaart and Wellner}{van~der Vaart and
  Wellner}{1996}]{van1997weak}
van~der Vaart, A.~W. and J.~A. Wellner (1996).
\newblock {\em Weak convergence and empirical processes with applications to
  statistics}.
\newblock Springer.

\bibitem[\protect\citeauthoryear{Wager and Athey}{Wager and
  Athey}{2018}]{wager2018estimation}
Wager, S. and S.~Athey (2018).
\newblock Estimation and inference of heterogeneous treatment effects using
  random forests.
\newblock {\em Journal of the American Statistical Association\/}~{\em
  113\/}(523), 1228--1242.

\bibitem[\protect\citeauthoryear{Wang and Hastie}{Wang and
  Hastie}{2014}]{wang2014boosted}
Wang, J.~C. and T.~Hastie (2014).
\newblock Boosted varying-coefficient regression models for product demand
  prediction.
\newblock {\em Journal of Computational and Graphical Statistics\/}~{\em
  23\/}(2), 361--382.

\bibitem[\protect\citeauthoryear{You, Ding, Canini, Pfeifer, and Gupta}{You
  et~al.}{2017}]{you2017deep}
You, S., D.~Ding, K.~Canini, J.~Pfeifer, and M.~Gupta (2017).
\newblock Deep lattice networks and partial monotonic functions.
\newblock In {\em Advances in Neural Information Processing Systems}, pp.\
  2981--2989.

\bibitem[\protect\citeauthoryear{Zhang and Zhu}{Zhang and
  Zhu}{2018}]{zhang2018visual}
Zhang, Q.-s. and S.-C. Zhu (2018).
\newblock Visual interpretability for deep learning: a survey.
\newblock {\em Frontiers of Information Technology \& Electronic
  Engineering\/}~{\em 19\/}(1), 27--39.

\bibitem[\protect\citeauthoryear{Zhou and Hooker}{Zhou and
  Hooker}{2018}]{zhou2018boulevard}
Zhou, Y. and G.~Hooker (2018).
\newblock Boulevard: Regularized stochastic gradient boosted trees and their
  limiting distribution.
\newblock {\em arXiv preprint arXiv:1806.09762\/}.

\end{thebibliography}

\appendix
\section{Proofs}

\subsection{Proof to Lemma \ref{lemma:donsker}}
\begin{proof}
$\xi_{n,6}$ is simply CLT.  For the rest, we will conclude the corresponding function classes are $P-$Donsker. The collection of indicators for hyper rectangles $(-\infty, a_1] \times \dots, (-\infty, a_p] \subseteq \mathbb{R}^p$ is Donsker. By taking difference at most $p$ times we get all elements in $\mathcal{H}$, therefore $\mathcal{G}$, the indicators of $\mathcal{R}$, is Donsker. Thus $\xi_{n,1} = \Op{n^{-\frac{1}{2}}}$.

The basis functions $\mathcal{E} = \{1, x^j, j=1,\dots, p\}$ is Donsker since all elements are monotonic and bounded since $x\in [-1,1]^p$. So $\mathcal{G} = \mathcal{H} \times \mathcal{E}$ is Donsker, which gives
$\xi_{n,2} = \Op{n^{-\frac{1}{2}}}$ and $\xi_{n,4} = \Op{n^{-\frac{1}{2}}}$.

In addition, for fixed $f$, $f\mathcal{G}$ is therefore Donsker, which gives $\xi_{n,3} = \Op{n^{-\frac{1}{2}}}$. And $\mathcal{G} \times \mathcal{G}$ is Donsker, which gives $\xi_{n,5} = \Op{n^{-\frac{1}{2}}}$.
 
\end{proof} 

\subsection{Proof to Theorem \ref{thm:consistency}}
To supplement our discussion of norms, it is immediate that $\norm{g_{R,j}} \leq \norm{h_R} \leq 1.$ Another key relation is $\norm{g_{R,j}}_{n,1} \leq \norm{h_R}_{n,1} = \norm{h_R}_n^2.$ We also assume that all $R$'s satisfy the terminal node condition.
\begin{lemma} $\norm{\hat{r}_{b+1}}_n \leq \norm{\hat{r}_{b}}_n$, $\norm{r_{b+1}} \leq \norm{r_{b}}$.
\end{lemma}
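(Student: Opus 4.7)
The plan is to expand $\norm{\hat{r}_{b+1}}_n^2$ directly from the update (\ref{eqn:improve}) and verify non-increase using only (C1), the disjointness of terminal nodes within each individual tree, and the bound $|x^j|\leq 1$ from (E1). Writing the update pointwise at a training point $(x_k,z_k)$, we have $\hat{r}_{b+1}(x_k,z_k) = \hat{r}_b(x_k,z_k) - \lambda \sum_{j=0}^p x_k^j\, t_{b+1}^j(z_k)$, where each tree $t_{b+1}^j$ assigns to $z\in R^j_i$ the terminal value $\bar g^j_i = \inner{\hat{r}_b}{g_{R^j_i,j}}_n/\norm{h_{R^j_i}}_n^2$. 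Squaring and summing over $k$ produces a cross term and a quadratic term that I would analyze separately.

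For the cross term, since summing $x_k^j\I{z_k\in R^j_i}$ against $\hat r_b(x_k,z_k)$ reproduces $\inner{\hat r_b}{g_{R^j_i,j}}_n$, one gets
\[
-2\lambda \sum_{j=0}^p \sum_k \hat{r}_b(x_k,z_k)\, x_k^j t_{b+1}^j(z_k) \;=\; -2\lambda \sum_{j,i} \frac{\inner{\hat{r}_b}{g_{R^j_i,j}}_n^2}{\norm{h_{R^j_i}}_n^2}.
\]
For the quadratic term, the elementary inequality $(\sum_{j=0}^p a_j)^2\leq (p+1)\sum_j a_j^2$ and $|x_k^j|\leq 1$ give $\sum_k (\sum_j x_k^j t_{b+1}^j(z_k))^2\leq (p+1)\sum_j \sum_k t_{b+1}^j(z_k)^2$. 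Because the terminal nodes of any fixed tree are disjoint, $\sum_k t_{b+1}^j(z_k)^2 = \sum_i (\bar g^j_i)^2 \norm{h_{R^j_i}}_n^2 = \sum_i \inner{\hat r_b}{g_{R^j_i,j}}_n^2/\norm{h_{R^j_i}}_n^2$. Putting these together,
\[
\norm{\hat{r}_{b+1}}_n^2 - \norm{\hat{r}_b}_n^2 \;\leq\; \bigl(-2\lambda + \lambda^2(p+1)\bigr) \sum_{j,i} \frac{\inner{\hat{r}_b}{g_{R^j_i,j}}_n^2}{\norm{h_{R^j_i}}_n^2},
\]
which is $\leq 0$ because (C1) forces $\lambda(p+1)\leq 1$, so the scalar prefactor is at most $-\lambda\leq 0$.

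The population statement follows by repeating the same algebra with $\inner{\cdot}{\cdot}_n$ and $\norm{\cdot}_n$ replaced by $\inner{\cdot}{\cdot}$ and $\norm{\cdot}$, using the identity $\norm{h_{R^j_i}}^2 = P(Z\in R^j_i)$ to get $\norm{\tilde t_{b+1}^j}^2 = \sum_i \inner{r_b}{g_{R^j_i,j}}^2/\norm{h_{R^j_i}}^2$ in the same way. I do not expect a genuine obstacle here; the one bookkeeping point that matters is keeping straight that terminal nodes within a single tree are disjoint while terminal nodes across different trees can overlap, which is precisely why the Cauchy–Schwarz step costs the factor $(p+1)$ and forces the step-size restriction $\lambda\leq 1/(p+1)$ stipulated in (C1).
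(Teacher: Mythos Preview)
Your argument is correct and reaches the same conclusion, but by a different route than the paper. The paper fixes $\lambda=(p+1)^{-1}$ exactly, writes $\hat r_b=\sum_{j=0}^p \lambda \hat r_b$, applies the triangle inequality to split $\norm{\hat r_{b+1}}_n$ into $p+1$ pieces, then uses disjointness of terminal nodes within each tree together with the ``deflated projection'' fact that, since $\norm{g_{R,j}}_n^2\leq \norm{h_R}_n^2$, subtracting $\frac{\inner{\hat r_b I_R}{g_{R,j}}_n}{\norm{h_R}_n^2}g_{R,j}$ cannot increase $\norm{\hat r_b I_R}_n$. You instead expand $\norm{\hat r_{b+1}}_n^2$, compute the cross term exactly, and bound the quadratic term by the pointwise inequality $(\sum_j a_j)^2\leq (p+1)\sum_j a_j^2$ combined with $|x^j|\leq 1$. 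Your version is arguably more elementary, works for every $\lambda\leq (p+1)^{-1}$ rather than only equality, and as a byproduct yields the quantitative decrease
\[
\norm{\hat r_{b+1}}_n^2 \;\leq\; \norm{\hat r_b}_n^2 \;-\; \lambda\bigl(2-\lambda(p+1)\bigr)\sum_{j,i}\frac{\inner{\hat r_b}{g_{R^j_i,j}}_n^2}{\norm{h_{R^j_i}}_n^2},
\]
whereas the paper's triangle-inequality decomposition only delivers the bare monotonicity. Both arguments hinge on the same two structural ingredients you identify: disjointness of terminal nodes within a single tree, and the bound $|x^j|\leq 1$ (equivalently $\norm{g_{R,j}}\leq\norm{h_R}$).
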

\begin{proof}
Consider the $p+1$ trees used for one boosting iteration with the terminal nodes denoted as $R^j_i, 0=1,\dots,p, i=1,\dots, m$, write $I_{R} = \I{z \in R}$.
\begin{align*}
\norm{\hat{r}_{b+1}}_n & = \norm{\hat{r}_b - \lambda\sum_{j=0}^p\sum_{i=1}^m \frac{n^{-1}\inner{\hat{r}_b}{g_{R^j_i, j}}_n}{n^{-1}\norm{h_{R^j_i}}_n^2}g_{R^j_i,j}}_n\\
&\leq \sum_{j=0}^p\norm{\sum_{i=1}^m \left(\lambda\hat{r}_{b}I_{R^j_i}- \frac{n^{-1}\inner{\lambda\hat{r}_bI_{R^j_i}}{g_{R^j_i, j}}_n}{n^{-1}\norm{h_{R^j_i}}_n^2}g_{R^j_i,j}\right)}_n \\
&= \sum_{j=0}^p
\left(\norm{\sum_{i=1}^m \left(\lambda\hat{r}_{b}I_{R^j_i}- \frac{n^{-1}\inner{\lambda\hat{r}_bI_{R^j_i}}{g_{R^j_i, j}}_n}{n^{-1}\norm{h_{R^j_i}}_n^2}g_{R^j_i,j}\right)}_n^2
\right)^{\frac{1}{2}} \\
&= \sum_{j=0}^p
\left(\sum_{i=1}^m \norm{ \lambda\hat{r}_{b}I_{R^j_i}- \frac{n^{-1}\inner{\lambda\hat{r}_bI_{R^j_i}}{g_{R^j_i, j}}_n}{n^{-1}\norm{h_{R^j_i}}_n^2}g_{R^j_i,j}}_n^2
\right)^{\frac{1}{2}} \\
&\leq \sum_{j=0}^p
\left(\sum_{i=1}^m \norm{ \lambda\hat{r}_{b}I_{R^j_i}}_n^2
\right)^{\frac{1}{2}} = \sum_{j=0}^p \norm{ \lambda\hat{r}_{b}}_n = \norm{\hat{r}_b}_n,
\end{align*}
given $\norm{h_{R^j_i}}_n^2 \geq \norm{g_{R^j_i, j}}_n^2$. Same argument can be applied to the population version hence we get the second part. 

\end{proof}
\begin{lemma} For any $b\leq 0$, as defined in (\ref{eqn:popboost}),  
\label{lemma:bound}
$$
\sup_{x, z} |r_{b+1}(x,z)| \leq 2  \sup_{x, z} |r_{b}(x,z)|.
$$
\end{lemma}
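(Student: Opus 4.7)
The plan is to reduce the bound to a pointwise triangle-inequality argument that exploits two facts: first, that the $m$ terminal nodes for each coordinate $j$ form a partition of $\mathbb{A}$, so that out of the $m(p+1)$ update terms only $p+1$ are nonzero at any fixed $(x,z)$; and second, that each coefficient $\langle r_b, g_{R^j_i,j}\rangle/\|h_{R^j_i}\|^2$ can be pointwise dominated by $\sup_{x,z}|r_b(x,z)|$ thanks to the normalization (E1) that $|x^j|\leq 1$.

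First I would fix an arbitrary $(x,z)$ and let $R^j_{i(j)}$ denote the unique terminal node of the $j$-th tree containing $z$. Substituting into (\ref{eqn:popboost}), the population update reduces to the $p+1$-term expression
$$
r_{b+1}(x,z) = r_b(x,z) - \lambda \sum_{j=0}^p \frac{\inner{r_b}{g_{R^j_{i(j)},j}}}{\norm{h_{R^j_{i(j)}}}^2} x^j.
$$
Next I would bound each coefficient. Since $\norm{h_R}^2 = \E{\I{Z\in R}}$ and $|X^j|\leq 1$ under (E1),
$$
\left|\frac{\inner{r_b}{g_{R,j}}}{\norm{h_R}^2}\right| = \frac{\left|\E{r_b(X,Z)\,\I{Z\in R}\, X^j}\right|}{\E{\I{Z\in R}}} \leq \sup_{x,z}|r_b(x,z)|,
$$
for every $R\in\mathcal{R}$ and every $j=0,\dots,p$. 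Multiplying by $|x^j|\leq 1$ and invoking the triangle inequality gives
$$
|r_{b+1}(x,z)| \leq |r_b(x,z)| + \lambda (p+1)\sup_{x,z}|r_b(x,z)|.
$$
By (C1) we have $\lambda(p+1)=1$, so the right-hand side is at most $2\sup_{x,z}|r_b(x,z)|$. Taking the supremum over $(x,z)$ on the left completes the argument.

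There is no real obstacle here; the only subtlety is bookkeeping, namely the reduction from $m(p+1)$ terms to $p+1$ terms via the partition structure of terminal nodes. Everything else is triangle inequality together with the boundedness of $X^j$ and the exact calibration of $\lambda$ in (C1).
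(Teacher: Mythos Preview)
Your proof is correct and follows essentially the same line as the paper's: both reduce the update at a fixed $(x,z)$ to the $p+1$ active terms via the partition structure, bound each coefficient $\inner{r_b}{g_{R,j}}/\norm{h_R}^2$ by $\sup_{x,z}|r_b|$ using $|x^j|\leq 1$, and finish with the triangle inequality together with $\lambda(p+1)=1$ from (C1). Your write-up is in fact a bit cleaner about the bookkeeping (making the dependence of the active node $R^j_{i(j)}$ on $j$ explicit), but the argument is the same.
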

\begin{proof}
As implied by (\ref{eqn:improve}), for $(x,z)$ such that $z \in R$, 
$$
r_{b+1}(x,z) = r_b(x,z) - \lambda \sum_{i=0}^p \frac{\inner{r_b}{g_{R,j}}}{\norm{h_{R}}^2} g_{R, j}(x,z).
$$
The key observation is that
$$
\left|\frac{\inner{r_b}{g_{R,i}}}{\norm{h_{R}}^2}\right| = \left|\frac{\int r_b \I{z \in R} x^j dP}{\int \I{z \in R}^2 dP}\right| \leq \sup_{x,z}|r_b(x,z)|.
$$
Therefore, provided $|g_{R,i}| \leq 1$ and write $z \in R_z$, 
\begin{align*}
\sup_{x,z} |r_{b+1}| &\leq \sup_{x,z} |r_b| + \lambda \sup_{x,z} \sum_{i=0}^{p} \left|\frac{\inner{r_b}{g_{R_z,i}}}{\norm{h_{R_z}}^2}\right|\left| g_{R_z,i}\right| \\
& \leq \sup |r_b| + \lambda \sum_{i=0}^{p}\sup |r_b| \cdot 1 \\
& = 2 \sup_{x, z} |r_b|.
\end{align*}  
Recursively we can conclude that $\sup_{x,z} |r_b| \leq 2^b \sup_{x,z} |r_0|$. \end{proof}

\begin{lemma} Under conditions (C1)-(C6), 
$$
\norm{\hat{r}_B}^2 = \norm{r_B}^2 + \sigma_{\epsilon}^2 + \op{1},
$$
where $\sigma^2_{\epsilon} = \norm{\epsilon}^2$.
\end{lemma}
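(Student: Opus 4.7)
The plan is to show by induction on $b$ that the empirical residual $\hat{r}_b$ stays close to its population analogue $r_b + \epsilon$ in an appropriate $L^2$ sense, and then to expand the squared norm using the independence of $\epsilon$ from $(X,Z)$. Concretely, set $\tilde{D}_b = \hat{r}_b - r_b - \epsilon$; the base case $\tilde{D}_0 = 0$ is immediate since $\hat{r}_0 = f + \epsilon$ and $r_0 = f$. Since both recursions subtract $\lambda \sum_j c \cdot g_{b,j}$ from the previous residual with identical trees, I would track the discrepancy $\delta_{b,j} = c^{(n)}_{b,j} - c_{b,j}$ between the empirical update coefficient $c^{(n)}_{b,j} = (n^{-1}\inner{\hat{r}_b}{g_{b,j}}_n)/(n^{-1}\norm{h_{b,j}}_n^2)$ and its population counterpart $c_{b,j} = \inner{r_b}{g_{b,j}}/\norm{h_{b,j}}^2$, so that $\tilde{D}_{b+1} = \tilde{D}_b - \lambda \sum_j \delta_{b,j} g_{b,j}$.

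To bound $|\delta_{b,j}|$, I would decompose the numerator of $c^{(n)}_{b,j}$ by writing $\hat{r}_b = r_b + \epsilon + \tilde{D}_b$, producing three contributions: an empirical-process error uniformly controlled over $R \in \mathcal{R}$ by $\xi_{n,2}$ and $\xi_{n,3}$ in Lemma \ref{lemma:donsker}; a noise-coupling term bounded by $\xi_{n,4}$; and a feedback term $n^{-1}\inner{\tilde{D}_b}{g_{b,j}}_n$ inheriting the prior inductive error. The denominator is handled via $\xi_{n,1}$ together with the lower bound $n^{-1}\norm{h_{b,j}}_n^2 \geq O(n^{-1/4+\eta})$ from (C2), so the first two contributions yield a per-step error of $O_p(n^{-1/4-\eta})$ uniformly over the realized tree structure.

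I would then iterate the recursion. Since $\lambda(p+1) \leq 1$ by (C1) and $\sup |r_b|$ together with its empirical analogue grow by at most a constant factor per step via Lemma \ref{lemma:bound}, a Gronwall-type argument absorbs the feedback contribution. After $B = o(\log n)$ iterations the total gap is of order $n^{o(1)} \cdot O_p(n^{-1/4-\eta}) = o_p(1)$, hence $\norm{\tilde{D}_B}^2 = o_p(1)$. Expanding $\norm{\hat{r}_B}^2 = \norm{r_B + \epsilon + \tilde{D}_B}^2$, the cross terms involving $\epsilon$ vanish in population by $\E{\epsilon} = 0$ and the independence of $\epsilon$ from $(X,Z)$, while their empirical-to-population gaps are absorbed by $\xi_{n,4}$ and $\xi_{n,6}$. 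This yields the identity $\norm{\hat{r}_B}^2 = \norm{r_B}^2 + \sigma_\epsilon^2 + o_p(1)$.

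The main obstacle is the adaptivity of the algorithm: the rectangles $R^j_i$ chosen at each boosting iteration depend on the data, so pointwise convergence of the update coefficients is not enough. The uniformity over $R \in \mathcal{R}$ provided by Lemma \ref{lemma:donsker} is essential for the per-step bounds to hold simultaneously for whichever rectangles the algorithm actually produces. A secondary subtlety is preventing the feedback from compounding exponentially over the $B$ iterations, which is precisely why the constraints on learning rate (C1), terminal node size (C2), and early stopping (C3) must be calibrated together so that the multiplicative blow-up $n^{o(1)}$ is dominated by the polynomial error decay $n^{-1/4-\eta}$.
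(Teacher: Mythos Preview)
Your plan is essentially the paper's own argument: define the discrepancy $\hat r_b - r_b - \epsilon$, derive a recursion for the per-step coefficient error, iterate it for $B=o(\log n)$ steps using the rates in (C1)--(C3), and then expand the squared norm. The decomposition $\hat r_b = r_b + \epsilon + \tilde D_b$, the use of $\xi_{n,4}$ for the noise piece, the lower bound on $\norm{h_{b,j}}_n^2$ from (C2) for the denominator, and the final $o_p(1)$ bookkeeping are all exactly what the paper does.

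There is one genuine gap you should repair. You say the term $n^{-1}\inner{r_b}{g_{b,j}}_n - \inner{r_b}{g_{b,j}}$ is ``uniformly controlled over $R\in\mathcal R$ by $\xi_{n,2}$ and $\xi_{n,3}$.'' But $\xi_{n,3}$ in Lemma~\ref{lemma:donsker} is stated for a \emph{fixed} $L^2$ function $f$, whereas $r_b$ is itself data-dependent: the trees used to define the population recursion (\ref{eqn:popboost}) are the very trees produced by the algorithm, so $r_b$ is a random function of the sample. You correctly flag this adaptivity issue for the rectangles $g_{b,j}$, yet it bites equally on the left-hand argument of the inner product. The paper's fix is to expand $r_b = f - \lambda\sum_{j<b}\sum_i \frac{\inner{r_j}{g_{j,i}}}{\norm{h_{j,i}}^2}g_{j,i}$ and then apply $\xi_{n,3}$ only to the fixed $f$ piece and $\xi_{n,5}$ (the uniform bound over pairs in $\mathcal G\times\mathcal G$) to each cross term $\inner{g_{j,i}}{g_b}$, with the coefficients controlled via Lemma~\ref{lemma:bound} by $\sup|r_j|\le 2^jC_0$. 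This is where the factor $C_0 2^b m\,\xi_n$ enters the recursion; without it your Gronwall step has no honest input, and the rate calibration $(p+1)^B 2^B \xi_n/h^4 = o_p(1)$ that makes the argument close cannot be justified.
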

\begin{proof}
Recall that
$$
\hat{r}_{b+1} = 
\hat{r}_b -  \lambda\sum_{j=0}^p\sum_{i=1}^m \frac{n^{-1}\inner{\hat{r}_b}{g_{R^j_i, j}}_n}{n^{-1}\norm{h_{R^j_i}}_n^2}g_{R^j_i,j}
=\hat{r}_b - \lambda \sum_{i=1}^{m(p+1)} \frac{n^{-1}\inner{\hat{r}_b}{g_{b, i}}_n}{n^{-1}\norm{h_{b,i}}_n^2} g_{b,i},
$$
and 
$$
r_{b+1}
=r_b -  \lambda\sum_{j=0}^p\sum_{i=1}^m \frac{\inner{r_b}{g_{R^j_i, j}}}{\norm{h_{R^j_i}}^2}g_{R^j_i,j} 
= r_b - \lambda  \sum_{i=1}^{m(p+1)} \frac{\inner{r_b}{g_{b, i}}}{\norm{h_{b,i}}^2} g_{b,i}.
$$
Therefore
\begin{align*}
\hat{r}_{b+1} - r_{b+1} & = (\hat{r}_b - r_b) + \lambda \sum_{j=0}^p\sum_{i=1}^m 
\left(
\frac{\inner{r_b}{g_{R^j_i, j}}}{\norm{h_{R^j_i}}^2} - \frac{n^{-1}\inner{\hat{r}_b}{g_{R^j_i, j}}_n}{n^{-1}\norm{h_{R^j_i}}_n^2}
\right) g_{R^j_i, j} \\
&= (\hat{r}_b - r_b) + \lambda \sum_{i=1}^{m(p+1)} \left(\frac{\inner{r_b}{g_{b,i}}}{\norm{h_{b,i}}^2} - \frac{n^{-1}\inner{\hat{r}_b}{g_{b,i}}_n}{n^{-1}\norm{h_{b,i}}_n^2}\right) g_{b,i}\\
& \triangleq (\hat{r}_b - r_b) + \lambda \delta_{b} \\
& =  (\hat{r}_0 - r_0)+\lambda \sum_{j=0}^b \delta_j = \epsilon + \lambda\sum_{j=0}^b \delta_j.
\end{align*}
Since for each fixed $j$, all $R_i^j$ are disjoint, we therefore define that
$$
\gamma_b = \sum_{j=0}^{p} 
\sup_{i=1,\dots, m} \left|
\frac{\inner{r_b}{g_{R^j_i, j}}}{\norm{h_{R^j_i}}^2} - \frac{n^{-1}\inner{\hat{r}_b}{g_{R^j_i, j}}_n}{n^{-1}\norm{h_{R^j_i}}_n^2}
\right|,
$$
which guarantees $\sup_{x,z} |\delta_b| \leq \gamma_b$. To bound $\gamma_b$, without loss of generality, we consider a single term involved such that
\begin{align*}
\frac{\inner{r_b}{g_{b}}}{\norm{h_{b}}^2} - \frac{n^{-1}\inner{\hat{r}_b}{g_{b}}_n}{n^{-1}\norm{h_{b}}_n^2} & \triangleq \left(\frac{u}{v} - \frac{\hat{u}}{\hat{v}}\right)\\
&= \left(\frac{u-\hat{u}}{v} + \left(\frac{1}{v}-\frac{1}{\hat{v}}\right)\hat{u}\right).
\end{align*}
First consider 
\begin{align*}
\hat{u} - u & = \frac{1}{n} \inner{\hat{r}_b}{g_{b}}_n - \inner{r_b}{g_{b}} \\
& = \frac{1}{n} \inner{\epsilon + r_b + \sum_{j=0}^{b-1}\delta_j}{g_b}_{n}  -  \inner{r_b}{g_{b}} \\
& = \frac{1}{n} \inner{\epsilon}{g_b}_n + \left( \frac{1}{n}\inner{r_b}{g_b}_n - \inner{r_b}{g_b}\right) + \left(\sum_{j=0}^{b-1} \frac{1}{n}\inner{\delta_j}{g_b}_n\right).
\end{align*}
Per Lemma \ref{lemma:bound}, we have
$$
\left|\frac{1}{n} \inner{\epsilon}{g_b}_n\right| \leq \xi_n
$$
and, by iteratively applying Lemma \ref{lemma:bound} and setting $C_0 = \max(\sup_{x,z} |f|, 1),$ 
\begin{align*}
&\left| \frac{1}{n}\inner{r_b}{g_b}_n - \inner{r_b}{g_b}\right| \\
=& \left| \frac{1}{n}\inner{f - \lambda \sum_{j=0}^{b-1}\sum_{i=1}^{m(p+1)}\frac{\inner{r_j}{g_{j, i}}}{\norm{h_{j,i}}^2} g_{j,i}}{g_b}_n - \inner{f - \lambda \sum_{j=0}^{b-1}\sum_{i=1}^{m(p+1)}\frac{\inner{r_j}{g_{j, i}}}{\norm{h_{j,i}}^2} g_{j,i}}{g_b}\right|\\
\leq & \left|\frac{1}{n}\inner{f}{g_b}_n - \inner{f}{g_b}\right| + \lambda \sum_{j=0}^{b-1}\sum_{i=1}^{m(p+1)} \left| \frac{1}{n}\inner{\frac{\inner{r_j}{g_{j, i}}}{\norm{h_{j,i}}^2} g_{j,i}}{g_b}_n - \inner{\frac{\inner{r_j}{g_{j, i}}}{\norm{h_{j,i}}^2} g_{j,i}}{g_b}\right| \\
\leq & \left|\frac{1}{n}\inner{f}{g_b}_n - \inner{f}{g_b}\right| + \lambda \sum_{j=0}^{b-1}\sum_{i=1}^{m(p+1)} \left|\frac{\inner{r_j}{g_{j, i}}}{\norm{h_{j,i}}^2}\right|\left| \frac{1}{n}\inner{g_{j,i}}{g_b}_n - \inner{ g_{j,i}}{g_b}\right| \\
\leq & \xi_n + \lambda \sum_{j=0}^{b-1} \sup |r_j| m(p+1) \xi_n \\
\leq & \xi_n + C_0\sum_{j=0}^{b-1} 2^j m \xi_n \\
\leq & C_02^bm\xi_n.
\end{align*}
The last term could be bounded by
\begin{align*}
\left|\sum_{j=0}^{b-1} \frac{1}{n}\inner{\delta_j}{g_b}_n\right| &\leq \frac{1}{n} \sum_{j=0}^{b-1} \norm{\delta_j}_{n,\infty} \norm{g_b}_{n,1} \leq \frac{1}{n}\sum_{j=0}^{b-1} \gamma_j \norm{h_b}_{n}^2, 
\end{align*}
where
$$
\norm{g_b}_{n,1} = \sum_{i=1}^n |g_b(x_i)|,\quad\norm{\delta_j}_{n,\infty} = \sup_{i=1,\dots, n} |\delta_j(x_i)|.
$$
Hence 
$$
|\hat{u} - u| \leq C_02^bm\xi_n + \frac{1}{n}\sum_{j=0}^{b-1} \gamma_j \norm{h_b}_{n}^2.
$$
In order to bound $|\hat{u}|$, we notice
\begin{align*}
|\hat{u}|  = \left|\frac{1}{n}\inner{\hat{r}_b}{g_{b}}_n\right| 
&\leq \left(\frac{1}{n}\norm{\hat{r}_n}_n^2\right)^{\frac{1}{2}}\cdot \left(\frac{1}{n}\norm{g_b}_n^2\right)^{\frac{1}{2}}\\
&\leq \left(\frac{1}{n}\norm{\hat{r}_0}_n^2\right)^{\frac{1}{2}}\cdot \left(\frac{1}{n}\norm{g_b}_n^2\right)^{\frac{1}{2}}\\
& = \left(\frac{1}{n}\norm{f+\epsilon}_n^2\right)^{\frac{1}{2}}\cdot \left(\frac{1}{n}\norm{g_b}_n^2\right)^{\frac{1}{2}}\\
& \leq (M+\sigma^2_{\epsilon} + \xi_n) \cdot \norm{g_b} \\
& \leq (M_0 + \xi_n) \cdot \norm{h_b}.
\end{align*}
Therefore, we get an upper bound for 
\begin{align*}
\left|\frac{\inner{r_b}{g_{b}}}{\norm{h_{b}}^2} - \frac{n^{-1}\inner{\hat{r}_b}{g_{b}}_n}{n^{-1}\norm{h_{b}}_n^2} \right| &\leq \frac{|\hat{u}-u|}{|v|} + \left|\frac{1}{v} - \frac{1}{\hat{v}}\right| |\hat{u}| \\
& =\frac{C_02^bm\xi_n + n^{-1}\sum_{j=0}^{b-1} \gamma_j \norm{h_b}_{n}^2}{\norm{h_b}^2} + \frac{\xi_n \cdot (M_0+\xi_n) \cdot \norm{h_b}}{\norm{h_b}^2 \cdot n^{-1}\norm{h_b}_n^2}\\
& \leq \frac{C_02^bm\xi_n}{\norm{h_b}^2} + \sum_{j=0}^{b-1} \gamma_j \left(1 + \frac{\xi_n}{\norm{h_b}^2}\right) + \frac{\xi_n(M_0 + \xi_n)}{\norm{h_b}(\norm{h_b}^2-\xi_n)}.
\end{align*}
Denote $h$ be the global minimum of the ensemble that $h = \min_{b,i,j} \norm{h_{R^j_i}}$, since $m \leq (h^2-\xi_n)^{-1}$, we obtain
\begin{align*}
\gamma_b &\leq (p+1) \left(
\frac{C_02^b\xi_n}{h^2(h^2-\xi_n)} + \sum_{j=0}^{b-1} \gamma_j \left(1 + \frac{\xi_n}{h^2}\right) + \frac{\xi_n(M_0 + \xi_n)}{h(h^2-\xi_n)}
\right).
\end{align*}
We would like to mention the elementary result that for a series $\{x_n\}$ satisfying
$$
x_n \leq 2^na + \sum_{i=0}^{n-1}bx_i + c,
$$
the partial sums satisfy
$$
\sum_{i=0}^n x_n \leq a\left(\frac{1-\left(\frac{2}{1+b}\right)^{n+1}}{1-\frac{2}{1+b}} \right)(1+b)^n - \frac{c}{b}.
$$
Hence, we can verify this upper bound that
\begin{align*}
\label{eqn:n1}
\sum_{j=0}^{B-1} \gamma_j &\leq (1+p)^B\left( \frac{C_0}{h^2-\xi_n}\left( 2+\frac{\xi_n}{h^2}\right)^{B}\left(1-\left(1-\frac{\frac{\xi_n}{h^2}}{2+\frac{\xi_n}{h^2}}\right)^{B-1} \right) - \frac{\xi_n(M_0+\xi_n)}{h(h^2-\xi_n)\left(1+\frac{\xi_n}{h^2}\right)}\right)
\end{align*}
Recall the rates that $B = o(\log n), h^2 = \Op{n^{-\frac{1}{4}+\eta}}, \xi_n = \Op{n^{-\frac{1}{2}}}$, thus
\begin{gather*}
\left(2+\frac{\xi_n}{h^2}\right)^B  = 2^B \cdot \Op1, \quad 
1-\left(1-\frac{\frac{\xi_n}{h^2}}{2+\frac{\xi_n}{h^2}}\right)^{B-1} = \frac{\xi_n}{h^2} \cdot \Op1, \\
\frac{\xi_n(M_0+\xi_n)}{h(h^2-\xi_n)\left(1+\frac{\xi_n}{h^2}\right)} = \frac{\xi_n}{h^3}\cdot \Op1.
\end{gather*}
Hence,
\begin{align*}
\sum_{j=0}^{B-1} \gamma_j  
& \leq (1+p)^B \left( \frac{ C_0}{h^2} \cdot 2^B \cdot \frac{\xi_n}{h^2} - \frac{\xi_n}{h^3}\right)\Op1 = o_p(1),
\end{align*}
which is equivalent to 
$$
\norm{\sum_{j=0}^{B-1}\delta_j} \leq \sum_{j=0}^{B-1}\norm{\delta_j} \leq \sum_{j=0}^{B-1}\gamma_j = \op{1}.
$$
Combining all above we have
\begin{align*}
\norm{\hat{r}_B}^2 & = \norm{r_B + \epsilon + \lambda \sum_{j=0}^{B-1} \delta{j}}^2 \\
& \leq \norm{\epsilon}^2 + \norm{r_B}^2 + \lambda^2 \norm{\sum_{j=0}^{B-1}\delta_j}^2 + 2 \lambda \norm{r_B+\epsilon} \norm{\sum_{j=0}^{B-1}\delta_j}\\
&=\sigma_{\epsilon}^2 + \norm{r_B}^2  + \op{1}.
\end{align*}
\end{proof}

\begin{lemma}
Under condition (C1)-(C6), for any $\rho>0$ there exists $B_0 = B_0(\rho)$ and $n_0 = n_0(\rho)$ such that for all $n > n_0$,
$$
P(\norm{r_{B_0}} \leq \rho) \geq 1-\rho.
$$
\end{lemma}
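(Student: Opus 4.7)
The plan is to reduce the statement to two separate arguments. The first is a \emph{deterministic} matching-pursuit convergence result: if (C4) holds at every iteration $b < B_0$, then $\norm{r_{B_0}} \leq \rho/2$ for a sufficiently large $B_0 = B_0(\rho)$. The second is a \emph{probabilistic} lift: for the data-driven tree structures, (C4) holds simultaneously for all $b < B_0$ with probability at least $1-\rho$ once $n$ is large.

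For the deterministic step I would first derive a one-step descent inequality from (\ref{eqn:popboost}). Writing $\alpha_{j,i} = \inner{r_b}{g_{R^j_i,j}}/\norm{h_{R^j_i}}^2$, expanding $\norm{r_{b+1}}^2$, using within-coordinate disjointness of the terminal nodes $\{R^j_i\}_i$ to kill the cross terms, applying $\norm{g_{R,j}}\leq \norm{h_R}$ together with Cauchy--Schwarz to the cross-coordinate sum, and finally using $\lambda^2(p+1)=\lambda$ with $\lambda=1/(p+1)$ (C1), produces
\begin{align*}
\norm{r_{b+1}}^2 \leq \norm{r_b}^2 - \lambda \sum_{j,i}\frac{\inner{r_b}{g_{R^j_i,j}}^2}{\norm{h_{R^j_i}}^2}.
\end{align*}
Condition (C4) together with (C6) lets me replace the right-hand sum by a constant multiple of $\sup_{g\in\mathcal{G}} \inner{r_b}{g}^2/\norm{g}^2$, yielding $\norm{r_{b+1}}^2 \leq \norm{r_b}^2 - c_0\sup_{g\in\mathcal{G}} \inner{r_b}{g}^2/\norm{g}^2$ for some $c_0>0$ depending only on $p, \nu, \alpha_0$.

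I would then lower bound the greedy gain by $\norm{r_b}$ itself. Since $f\in\mathrm{span}(\mathcal{G})$ by (C5) and each update of (\ref{eqn:popboost}) stays in $\mathrm{span}(\mathcal{G})$, the residual $r_b$ lies in $\mathrm{span}(\mathcal{G})$ at every step; writing $r_b=\sum_k \beta_{b,k}\,g_k/\norm{g_k}$ with $g_k\in\mathcal{G}$ and setting $V_b = \sum_k |\beta_{b,k}|$, a duality/Cauchy--Schwarz argument gives $\sup_{g\in\mathcal{G}} |\inner{r_b}{g}|/\norm{g} \geq \norm{r_b}^2/V_b$, so $u_b = \norm{r_b}^2$ satisfies the recursion $u_{b+1} \leq u_b - c_0 u_b^2/V_b^2$. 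A Jones--Barron / Mallat--Zhang style induction (using Lemma \ref{lemma:bound} to bound the $L^\infty$ norm of the update added to $r_b$) keeps $V_b$ uniformly bounded by some $V^*=V^*(f,p)$, giving the classical greedy rate $u_B = O(V^{*2}/(c_0 B))$. This supplies a $B_0=B_0(\rho)$ for which $\norm{r_{B_0}} \leq \rho/2$ deterministically on the event that (C4) holds at each of the first $B_0$ steps. The probabilistic side is then immediate from Lemma \ref{lemma:donsker}: choosing the tree at step $b$ to maximize the empirical ratio $|\inner{r_b}{g}_n|/\norm{g}_n^2$ realizes (C4) up to an $o_p(1)$ error uniformly over $\mathcal{G}$, and a union bound over the fixed number $B_0$ of iterations produces the required $n_0(\rho)$.

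The main obstacle is controlling $V_b$ uniformly in $b$. The naive telescoping bound $V_b \leq V_0 + bC$ grows linearly and, fed into the recursion $u_{b+1} \leq u_b - c_0 u_b^2/V_b^2$, only yields boundedness of $u_b$ rather than $u_b\to 0$. The resolution exploits the self-limiting structure of the update: the coefficient mass added at step $b$ is itself of order $\sup_{x,z}|r_b|$, which shrinks as $r_b$ shrinks, so a careful induction keeps $V_b$ bounded by a constant depending only on $f$ and $p$. The extra bookkeeping required to accommodate the $(p+1)$ parallel coordinate trees per iteration and the gap between $\norm{g_{R,j}}$ and $\norm{h_R}$ is where most of the technical effort lies.
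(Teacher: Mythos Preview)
Your two-stage plan (a deterministic matching-pursuit descent under (C4), followed by a probabilistic guarantee that (C4) holds for the data-driven trees) matches the paper's structure: the paper simply cites Lemma~3 and formula~(11) of B\"uhlmann (2002), which rest on the Mallat--Zhang matching-pursuit convergence, and then notes that (C4) is realised in probability by the empirical optimum via the Donsker bounds. Your one-step descent inequality $\norm{r_{b+1}}^2 \le \norm{r_b}^2 - \lambda\sum_{j,i}\inner{r_b}{g_{R^j_i,j}}^2/\norm{h_{R^j_i}}^2$ is also derived correctly.

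The substantive gap is your control of $V_b$. You invoke Lemma~\ref{lemma:bound} and claim the coefficient mass added at step $b$ is ``of order $\sup_{x,z}|r_b|$, which shrinks as $r_b$ shrinks''. But Lemma~\ref{lemma:bound} says the opposite: $\sup_{x,z}|r_{b+1}|\le 2\sup_{x,z}|r_b|$, so the sup-norm bound grows like $2^b$; only the $L^2$ norm $\norm{r_b}$ is nonincreasing. Even replacing $\sup|r_b|$ by $\norm{r_b}$ in the added-mass bound leaves you with the coupled system $u_{b+1}\le u_b - c_0 u_b^2/V_b^2$, $V_{b+1}\le V_b + C\sqrt{u_b}$, which does not force $u_b\to 0$ without an a~priori bound on $V_b$. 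The Jones--Barron $O(1/b)$ rate you quote holds for the relaxed greedy algorithm with its specific averaging step, not for fixed-$\lambda$ boosting with $m(p{+}1)$ simultaneous updates; a uniform bound on $V_b$ in this setting is genuinely nontrivial and, for weak greedy schemes in general, false.

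The paper sidesteps this entirely. It does not pursue a quantitative rate; B\"uhlmann's Lemma~3 delivers only \emph{qualitative} convergence $\norm{r_b}\to 0$ for any $\nu$-approximate greedy sequence (which is exactly what (C4) encodes), and a $B_0(\rho)$ is then extracted from that convergence together with $f\in\mathrm{span}(\mathcal{G})$ from (C5). No $V_b$ bookkeeping is needed. If you want a self-contained version, drop the Jones--Barron rate claim and run the Mallat--Zhang contradiction: $\norm{r_b}$ monotone implies the gains tend to zero, hence $\sup_{g}|\inner{r_b}{g}|/\norm{g}\to 0$, so any weak cluster point of $r_b$ is orthogonal to $\mathcal{G}$ yet lies in $\overline{\mathrm{span}(\mathcal{G})}$, forcing it to be zero.
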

\begin{proof}
Lemma 3 in \citet{buhlmann2002consistency} proves this statement for rectangular indicators. By fixing $\lambda = (1+p)^{-1}$ and introducing conditions (C3) and (C4), formula (11) in \citet{buhlmann2002consistency} still holds in terms of the single terminal node in each of the trees that corresponds to our defined $R^*$. Therefore cited Lemma 3 holds for our boosted trees. The conclusion is therefore reached by the assumption that $f \in \mbox{span}(\mathcal{G})$.
\end{proof}

\begin{proof}[Proof to main Theorem] For a given $\rho>0$, since $\hat{r}_B(x^*, z^*) - f(x^*,z^*)$ is independent of $ \epsilon$,  
\begin{align*}
\Ex{(x^*,z^*)}{|\hat{\beta}_B(z^*)^Tx^* - f(x^*,z^*)|^2} & = \Ex{(x^*,z^*)}{|\hat{r}_B(x^*, z^*) - f(x^*,z^*)- \epsilon|^2} - \norm{\epsilon}^2\\
& = \norm{\hat{r}_B}^2 - \norm{\epsilon}^2 \\
& \leq \norm{r_B}^2 + \op{1} \\
& \leq \norm{r_{B_0}}^2 + \op{1} \\
& \leq \rho \Op1 + \op{1}. 
\end{align*}
We reach the conclusion by sending $\rho \to 0$.
\end{proof}

\subsection{Proof to Corollary \ref{cor:consistency}}
\begin{proof} We prove by contradiction. Assume there exists $0< \epsilon_0 < s, c_0 > 0$ s.t. $$P(\norm{\beta_B(z^*) - \beta(z^*)}^2>\epsilon_0) \geq c_0$$ for any sufficiently large $n$. Fix $n$ and consider any $z_0$ s.t. $\norm{\beta_B(z_0) - \beta(z_0)}>\epsilon_0$. The corresponding open ball $B(x_0, s)$ has volume $v_0 = O(s^p)$. Write $\beta = \bmat{\beta^0\\\beta^{-0}},$ 
\begin{align*}
& \int_{B(x_0,s)} \inner{\bmat{1 \\x}}{\beta_B(z_0) - \beta(z_0)}^2 dP_{x|z_0} \\
\geq & c\int_{B(x_0,s)} \inner{\bmat{1 \\x}}{\beta_B(z_0) - \beta(z_0)}^2 dx \\
\geq & cv_0\inner{\bmat{1 \\x_0}}{\beta_B(z_0) - \beta(z_0)}^2 + c\int_{B(0,s)}\inner{\bmat{1 \\x}}{\beta_B(z_0) - \beta(z_0)}^2 dx\\
\geq & cv_0 (\beta_B(z_0)^0 - \beta(z_0)^0)^2 + ct_0 \norm{\beta_B(z_0)^{-0} - \beta(z_0)^{-0}}^2 \\
\geq & c\min(v_0, t_0) \epsilon_0.
\end{align*}
where $t_0 = \int_{B(0,s)}x_1^2 dx = O(s^p).$
That is equivalent to 
$$
\Ex{(x^*,z^*)}{|\hat{\beta}_B(z^*)^Tx^* - f(x^*,z^*)|^2} > c\min(v_0, t_0) \epsilon_0,
$$
contradicting Theorem \ref{thm:consistency}.
\end{proof}

\end{document}